\def\journal@name{} 
\newtheorem{assumption}{Assumption}
\newtheorem{theorem}{Theorem}
\newtheorem{proposition}{Proposition}
\newtheorem{lemma}{Lemma}
\crefname{theorem}{Theorem}{Theorems}
\crefname{lemma}{Lemma}{Lemmas}
\crefname{proposition}{Proposition}{Propositions}
\crefname{corollary}{Corollary}{Corollaries}
\crefname{conjecture}{Conjecture}{Conjectures}
\crefname{definition}{Definition}{Definitions}
\crefname{assumption}{Assumption}{Assumptions}
\crefname{example}{Example}{Examples}
\crefname{remark}{Remark}{Remarks}
\newcommand{\charalphabetmacro}[3]{
	\def\mydeffoo##1{\expandafter\def\csname #1##1\endcsname{#2{##1}}}
	\def\mydefallfoo##1{\ifx##1\mydefallfoo\else\mydeffoo##1\expandafter\mydefallfoo\fi}
	\expandafter \mydefallfoo #3\mydefallfoo
}
\newcommand{\texalphabetmacro}[3]{
	\def\mydeftex##1{\expandafter\def\csname #1##1\endcsname{#2{\csname ##1\endcsname}}}
	\def\mydefalltex##1{\ifx\mydefalltex##1\else\mydeftex{##1}%
		\lowercase{\mydeftex{##1}}\expandafter\mydefalltex\fi}
	\expandafter \mydefalltex #3\mydefalltex
}
\newcommand{\upperCaseRomanLetters}{ABCDEFGHIJKLMNOPQRSTUVWXYZ}
\newcommand{\lowerCaseRomanLetters}{abcdefghijklmnopqrstuvwxyz}
\newcommand{\lowerCaseRomanLettersNoMHT}{abcdefgijklnopqrsuvwxyz}
\newcommand{\lowerCaseRomanLettersNoMF}{abcdeghijklnopqrstuvwxyz}
\newcommand{\lowerCaseGreekLetters}{{alpha}{beta}{gamma}{delta}{epsilon}{zeta}{eta}{theta}{iota}{kappa}{lambda}{mu}{nu}{xi}{omicron}{pi}{rho}{sigma}{tau}{upsilon}{phi}{chi}{psi}{omega}}
\newcommand{\lowerCaseGreekLettersNoEta}{{alpha}{beta}{gamma}{delta}{epsilon}{zeta}{theta}{iota}{kappa}{lambda}{mu}{nu}{xi}{omicron}{pi}{rho}{sigma}{tau}{upsilon}{phi}{chi}{psi}{omega}}
\newcommand{\upperCaseGreekLettersInLaTeX}{{Gamma}{Delta}{Theta}{Lambda}{Xi}{Pi}{Sigma}{Upsilon}{Phi}{Psi}{Omega}}
\newcommand{\hatmathcal}[1]{\hat{\mathcal{#1}}}
\newcommand{\boldhat}[1]{\mathbf{\hat{#1}}}
\newcommand{\datarv}{\mathbf{X}}
\newcommand{\opt}{\star}
\newcommand{\mle}[1]{\hat\param}
\newcommand{\spc}{\text{SPC}}  
\newcommand{\spcrepobsobs}[1]{\datarv^{\spc}_{\text{obs}}}
\newcommand{\kl}[2]{\mathrm{KL}(#1 \mid #2)}
\newcommand{\klest}[2]{\mathrm{\widehat{KL}}(#1 \mid #2)}
\newcommand{\klestsub}[4]{\mathrm{\widehat{KL}^{#1}}_{#2}(#3 \mid #4)}
\newcommand{\data}[1]{x_{#1}}
\newcommand{\numobs}{N}
\newcommand{\numcomps}{K}
\newcommand{\allparam}{\theta}
\newcommand{\param}{\phi}
\newcommand{\discr}[2]{\mcD(#1\mid #2)}
\newcommand{\discrest}[2]{\hat\mcD(#1\mid #2)}
\newcommand{\blmetric}{d_\mathrm{BL}}
\renewcommand{\Pr}{\text{pr}} 
\newcommand{\dee}{\mathrm{d}}
\newcommand{\op}{o_{P}}
\newcommand{\BLnorm}[1]{\Vert #1 \Vert_{\mathrm{BL}}}
\def\argmin{\operatornamewithlimits{arg\,min}}
\newcommand{\distNamed}[1]{{\sf{#1}}}
\newcommand{\distCat}{\distNamed{Categorical}}
\newcommand{\distNBinom}{\distNamed{NegBin}}
\newcommand{\distPoiss}{\distNamed{Poiss}}
\newcommand{\distNorm}{\mathcal{N}}
\newcommand{\distSNorm}{\mathcal{SN}}
\begin{document}

\begin{frontmatter}
\title{Structurally Aware Robust \\ Model Selection for Mixtures}
\runtitle{~Structurally Aware Robust Model Selection for Mixtures}
\runauthor{J.\ Li and J.\ H.\ Huggins~}

\begin{aug}
	\author[A]{\fnms{Jiawei} \snm{Li}\ead[label=e2]{jwli@bu.edu}}
	\and
	\author[A]{\fnms{Jonathan H.} \snm{Huggins}\ead[label=e1]{huggins@bu.edu}}
	\address[A]{Department of Mathematics \& Statistics, Boston University, \printead{e2,e1}}
\end{aug}

\maketitle

\begin{abstract}
Mixture models are often used to identify meaningful subpopulations (i.e., clusters) in observed data such that the subpopulations have a real-world interpretation (e.g., as cell types).
However, when used for subpopulation discovery, mixture model inference is usually ill-defined \emph{a priori} because the assumed observation model is only an approximation to the true data-generating process. 
Thus, as the number of observations increases, rather than obtaining better inferences, the opposite occurs: 
the data is explained by adding spurious subpopulations that compensate for the shortcomings of the observation model. 
However, there are two important sources of prior knowledge that we can exploit to obtain well-defined results no matter the dataset size: known causal structure (e.g., knowing that the latent subpopulations cause the observed signal but not vice-versa) and a rough sense of how wrong the observation model is (e.g., based on small amounts of expert-labeled data or some understanding of the data-generating process). 
We propose a new model selection criteria that, while model-based, uses this available knowledge to obtain mixture model inferences that are robust to misspecification of the observation model. %
We provide theoretical support for our approach by proving a first-of-its-kind consistency result under intuitive assumptions.
Simulation studies and an application to flow cytometry data demonstrate our model selection criteria consistently finds the correct number of subpopulations. 
\end{abstract}

\begin{keyword}
Cluster analysis; Model selection; Misspecified model; Mixture modeling
\end{keyword}
\end{frontmatter}

\section{Introduction}
\label{sec:intro}
In scientific applications, mixture models are often used to discover unobserved subpopulations or distinct types that generated the observed data. 
Examples include cell type identification (e.g., using single-cell assays) \citep{Gorsky:2020,Prabhakaran:2016}, behavioral genotype discovery (e.g., using gene expression data) \citep{Stevens:2019}, and psychology (e.g., patterns of IQ development) \citep{Bauer:2007}.
Further examples include disease recognition \citep{Greenspan:2006,Adelino:2007,Ghorbani:2016}, anomaly detection  \citep{Zong:2018}, 
and types of abnormal heart rhythms from ECG \citep{Ghorbani:2016}).
Since the true number of types or subpopulations $K_{\circ}$ is usually unknown \emph{a priori}, a key challenge in these applications is to determine $K_{\circ}$ \citep{Cai:2021,Guha:2021,Fruhwurth:2006,Miller:2019}. 
While numerous inference methods provide consistent estimation when the mixture components are well-specified,
if the components are misspecified, then standard methods do not work as intended. 
As the number of observations increases, rather than obtaining better inferences, the opposite occurs:
the data is explained by adding spurious latent structures that compensate for the shortcomings of the observation model, which is known as \emph{overfitting} \citep{Cai:2021,Miller:2019,Fruhwurth:2006}.

To address the overfitting problem in the misspecified setting, various robust clustering methods have been developed. 
Heavy-tailed mixtures  \citep{Archambeau:2007,Christopher:2004,Wang:2018} and sample re-weighting \citep{Forero:2011} 
both aim to account for slight model mismatches and outlier effects. 
However, these methods require choosing the number of clusters for the mixture model beforehand,
and their performance heavily relies knowing the number of subpopulations.
Most closely related to our approach, \citet{Miller:2019} provides a novel perspective on establishing robustness without prior knowledge of number of clusters. 
They employ a technique they call \emph{coarsening}: 
instead of assuming the data was generated from the assumed model, the \emph{coarsened posterior} allows a certain degree of divergence between the model and data distribution. 
This flexibility permits the overall mixture model to deviate from the true underlying data by a set threshold.
While this approach shows good robustness properties in practice, imposing the robustness threshold on the \emph{overall} mixture density does not guarantee the convergence to a 
meaningful number of components.
Moreover, their approach to determining $\numcomps_{o}$ involves post-processing, specifically eliminating the minimal clusters.
An additional challenge when using the coarsened posterior approach is its high computational cost:
it requires running Markov chain Monte Carlo dozens of times to heuristically determine a suitable robustness threshold.
In this paper, we propose a model selection criterion to address the statistical and computational shortcomings of existing approaches. 
In contrast to the coarsened posterior approach, which is based on the overall degree of model--data mismatch, 
our approach operates at the \emph{component level}. %
Moreover, our criterion is more flexible than previous approaches and can intuitively incorporate expert and prior knowledge about the nature of the model misspecification. 
One source of flexibility is the ability to combine it with any parameter estimation technique for mixture models.
Another is that it supports a wide range of discrepancy measures between distributions, which can be chosen in an application-specific manner. 
Given a fitted model for each candidate number of components, the computational cost required to compute our criterion is determined by the cost of 
estimating the discrepancy measure, which is typically quite small. 
Because we employ component-wise robustness, we are able to show that, under natural conditions and for a wide variety of discrepancies, 
including the Kullback--Leibler divergence and the maximum mean discrepancy,
our method identifies the number of components correctly in large datasets.
We validate the effectiveness and computational efficiency of our model selection criterion when using the Kullback--Leibler divergence as the discrepancy measure
through a combination of simulation studies and an application to cell type discovery using flow cytometry data.

\section{Setting and Motivation}
\label{sec:motivation}

Consider data $\data{1:\numobs}= (\data{1}, \dots, \data{\numobs}) \in \mcX^{\otimes \numobs}$ independently and identically distributed from some unknown distribution $P_o$ defined on the measurable space $(\mcX, \mcB_x)$. 
When discovering latent types, we assume that $P_o = \sum_{k=1}^{\numcomps_o} \pi_{ok} P_{ok}$, 
where $\numcomps_o$ is the number of types, $P_{ok}$ is the distribution of observations from the $k$th type, and 
$ \pi_{ok}$ is the probability that an observation belongs to the $k$th type.
Thus, $\pi_{o} = (\pi_{o1},\dots,\pi_{o\numcomps_{o}}) \in \Delta_{\numcomps_o} = \{ \pi \in \mathbb{R}_{+}^{\numcomps_{o}} \mid \sum_{k=1}^{\numcomps_{o}} \pi_{k} = 1 \}$, the $(\numcomps_o-1)$-dimensional probability simplex, and we assume $\pi_{ok} > 0$ so all components contribute observations. 
We posit a mixture model parameterized by $\theta \in \Theta = \bigcup_{\numcomps=1}^{\infty} \Theta^{(\numcomps)}$,
where $\Theta^{(\numcomps)} = \Delta_{\numcomps} \times \Phi^{\otimes \numcomps}$ for measurable space $(\Phi, \mcB_{\Phi})$
and $\phi \in \Phi$  parameterizes the family of mixture component distributions $\mcF = \{ F_{\phi} \mid \phi \in \Phi \}$. 
For $\theta = (\pi, \phi_{1},\dots,\phi_{\numcomps}) \in \Theta$, the mixture model distribution is 
$G_{\theta} = \sum_{k=1}^{\numcomps}\pi_{k}F_{\param_{k}}$. 
Given $\data{1:\numobs}$, our goals are to (1) find $\numcomps_{o}$ and (2) find a parameter 
estimate $\hat\theta = (\hat\pi, \hat\phi_{1},\dots,\hat\phi_{\numcomps_{o}})  \in \Theta^{(\numcomps_{o})}$ such that (possibly after a reordering of the components)
$\hat\pi \approx \pi_{o}$ and $F_{\hat\phi_{k}} \approx P_{ok}$. 

Finding $\numcomps_{o}$ can be challenging for standard model selection approaches given the presence of model misspecification \citep{Cai:2021,Guha:2021,Fruhwurth:2006,Miller:2019}.
Specifically, as the number of observations increases, model selection criteria tend to create additional clusters to compensate for the model--data mismatch and thus overestimates
$\numcomps_{o}$. 
The following toy example illustrates this phenomenon.
Suppose data is generated from a mixture of $\numcomps_{o} = 2$ skew normal distributions and we use a Gaussian mixture to model the data.
Here the level of model--data mismatch is controlled by the skewness parameter of each skew normal component in the true generative distribution $P_{o}$. 
We consider the following scenarios: two equal-sized clusters with the same level of misspecification (denoted \texttt{same})
and two equal-sized clusters with different levels of misspecification (denoted \texttt{different}).  
We compare three methods on their ability to find $\numcomps_{o}$: standard expectation--maximization with the Bayesian information criterion \citep{Chen:1998}, 
the coarsened posterior \citep{Miller:2019}, and expectation--maximization with our proposed model selection criterion. 
See \cref{sec:simulation-gauss} for further details about the experimental set-up. 

As shown in \cref{fig:motivate-comparison}, in both scenarios the Bayesian information criterion selects $\numcomps > \numcomps_{o}$ to capture the skew-normal distributions.
While the coarsened posterior  performs well in the \texttt{same} case,
its limitations become evident when the degree of misspecification differs significantly between components.
In the \texttt{different} scenario, the coarsened posterior overfits the cluster with a larger degree of misspecification.

\begin{figure}[tp]
	\centering	
	\includegraphics[width=65mm]{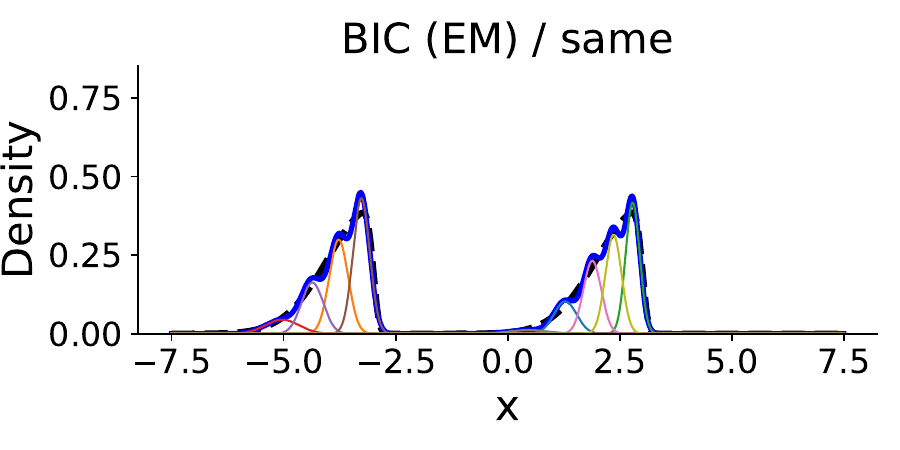}
	\includegraphics[width=65mm]{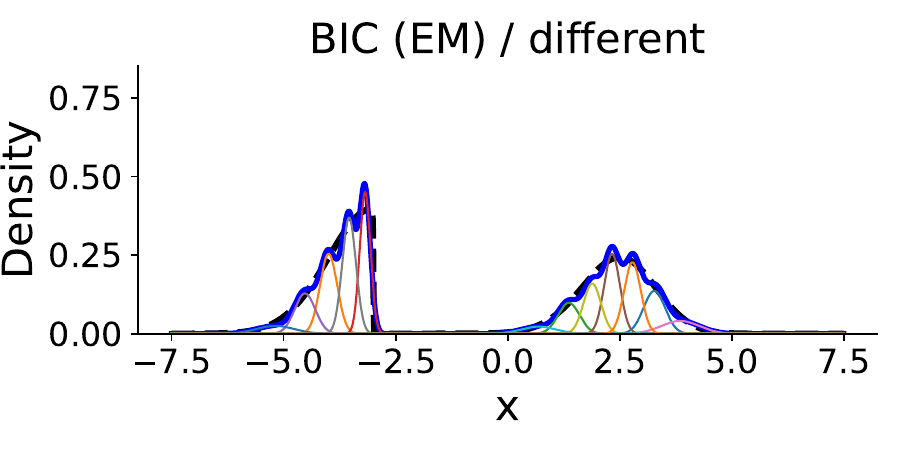}\\
	\includegraphics[width=65mm]{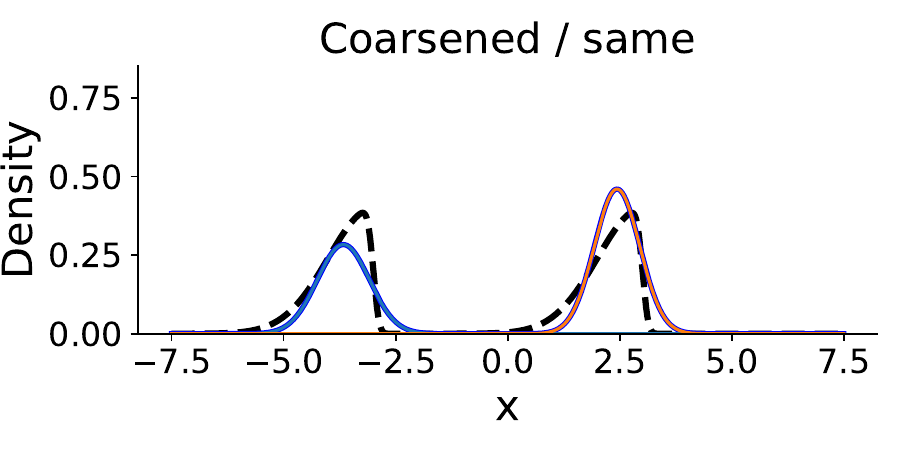}
	\includegraphics[width=65mm]{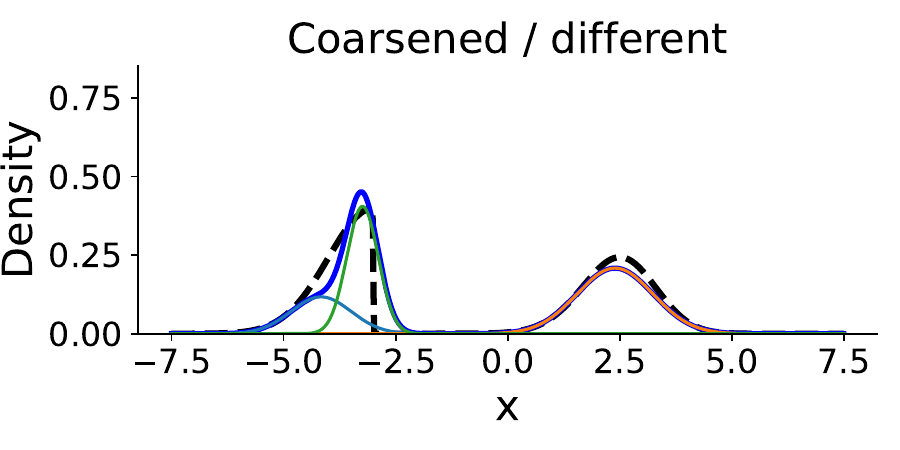}\\
	\includegraphics[width=65mm]{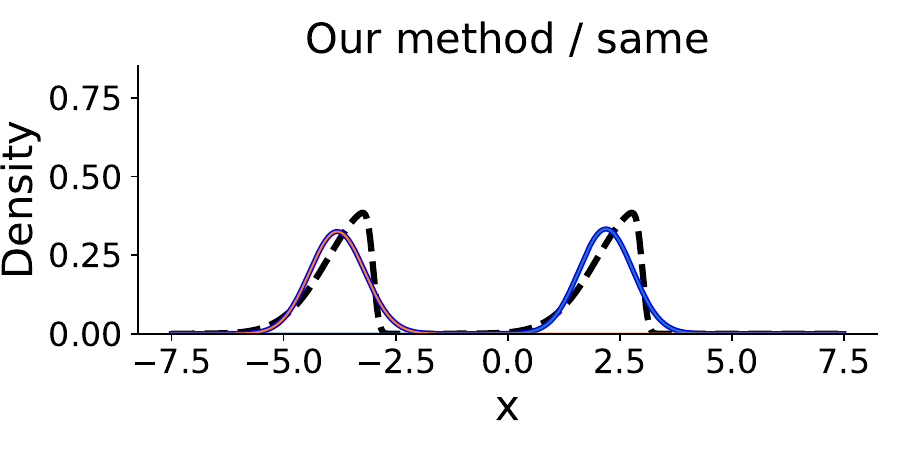}
	\includegraphics[width=65mm]{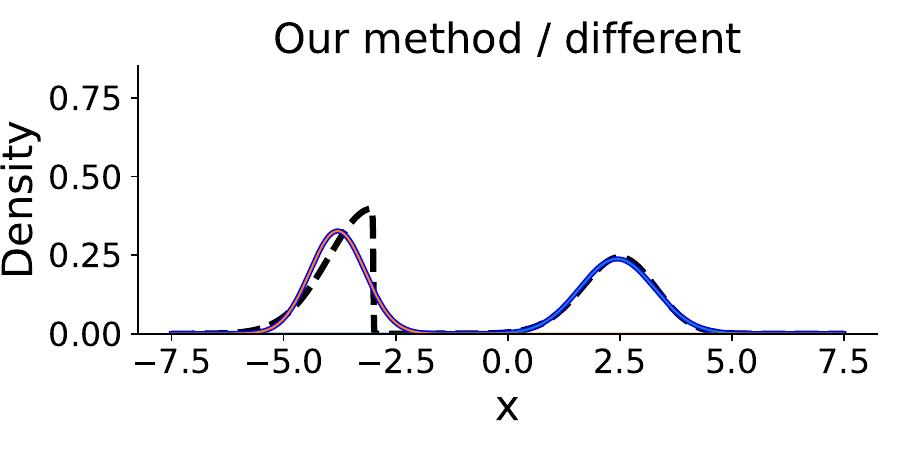}
	\caption{
		For the mixture of skew-normals example from \cref{sec:motivation}, 
		each panel shows the density of $P_{o}$ (dashed lines) and the densities of the fitted Gaussian mixture model and each
		component distribution (solid lines) using $\numobs = 10\,000$ observations. 
		Results are given for three approaches:
		expectation--maximization with the Bayesian information criterion (first row), 
		the coarsened posterior (second row),
		and our robust model selection method (third row).} 
	\label{fig:motivate-comparison}
\end{figure}

Another disadvantage of the coarsening approach comes from computational considerations. 
Letting $\pi_{0}$ denote the prior density and $\alpha > 0$, the coarsened posterior is given by
$$
\pi_{\alpha}(\theta \mid \data{1:\numobs}) \propto \pi_{0}(\theta)\prod_{n=1}^{\numobs} f_{\phi}(\data{n})^{\frac{\alpha}{\numobs +\alpha}},
$$
where $\alpha$ controls the expected degree of misspecification
(roughly speaking, that the Kullback--Leibler divergence between the population and estimated distributions 
is of order $1/\alpha$). 
Selecting $\alpha$ requires a grid search over dozens or more plausible values.
Since a separate power posterior must be estimated for each $\alpha$ value, which typically requires using a slow Markov chain Monte Carlo algorithm,
the coarsened posterior has a high computational cost.

To address the limitations of standard model selection methods and the coarsened posterior, we introduce a computationally efficient and statistically
sound model selection criterion. 
As shown in \cref{fig:motivate-comparison}, our method -- which we describe in detail in the next section -- identifies the correct value for $\numcomps_{o}$ in both scenarios. %

\section{Structurally Aware Model Selection}
\label{sec:method}

\subsection{Method}

Our starting point for developing our robust model selection approach is to rewrite the model in a 
form which captures the known causal structure and makes the component distributions -- which are the source of misspecification -- explicit.  
The posterior, the coarsened posterior, and standard model selection methods based on point estimates are all based (at least asymptotically)
on formulating the mixture model as 
\begin{equation}
\data{n} \sim  G_{\theta} = \sum_{k=1}^{K} \pi_{k}F_{\param_{k}} \quad (n=1,\ldots, N).  \label{eq:marginal-mixture-model}
\end{equation}
However, \cref{eq:marginal-mixture-model} implicitly posits that the structural causal model for each observation is 
$
x_{n} \gets h(\param, \pi; \varepsilon_{n}),
$
where $\param = (\param_{1}, \dots, \param_{\numcomps})$, $\pi = (\pi_{1},\dots, \pi_{\numcomps})$, $h$ is a deterministic function, and $\varepsilon_{n}$ is an
independent ``noise'' random variable. 
Since $h$ is treated as a black box, it is impossible for inference methods based on \cref{eq:marginal-mixture-model} to always
correctly account for component-level misspecification. 
This can lead to deceptive identification regarding the actual model fit from the perspective of individual components, as illustrated in \cref{sec:motivation}. 

An alternative way to formulate the mixture model is in the ``uncollapsed'' (latent variable) form
\begin{equation}
	\begin{aligned}
		z_{n} & \sim \distCat(\pi), \quad & %
		\data{n} \mid z_{n} = k & \sim F_{\param_{k} } && (n=1,\dots,\numobs).
	\end{aligned} \label{eq:mixture-model}
\end{equation}
\Cref{eq:mixture-model} can also be written as the structural causal model
\begin{equation}
	\begin{aligned}
		z_{n} &\gets h_{z}(\pi; \varepsilon_{z,n}), \quad &
		x_{n} &\gets h_{x}(\param, z_{n}; \varepsilon_{x,n}) && (n=1,\dots,\numobs),
		\label{eq:stare-mixture-model}
	\end{aligned}
\end{equation}
where $h_{z}$ and $h_{x}$ are deterministic functions and $\varepsilon_{z,n}$ and $\varepsilon_{x,n}$ are independent ``noise'' random variables. 
For a model selection criterion to reliably select the true number of components, it must make use of the causal structure given in \cref{eq:stare-mixture-model}.
In particular, we want a method that exploits the fact that we know the relationship $x_{n} \gets h_{x}(\param, z_{n}; \varepsilon_{x,n})$ is misspecified. 
Hence, we describe our approach as being ``structurally aware.''
We also rely on the assumption that $z_{n} \gets h_{z}(\pi; \varepsilon_{z,n})$ is well-specified, although it is possible that assumption could be relaxed as well. 

To construct a structurally aware model selection method that is robust to misspecification, we must have a way of measuring how different the estimated 
parametric component distribution $F_{\phi_{k}}$ is from the true component distribution $P_{ok}$.
Therefore, we introduce a divergence $\discr{P_{ok}}{F_{\phi_{k}}}$, which can be chosen by the user based on application-specific considerations.
We assume that for some $\rho > 0$ representing the degree of misspecification, we can estimate component parameters $\phi_{\star k} \in \Phi$ such that 
$\discr{P_{ok}}{F_{\phi_{\star k}}} < \rho$ ($k = 1,\dots, \numcomps_{o}$).
Let $z = (z_1, \ldots, z_{\numobs})$ denote the cluster assignments for the observations defined in \cref{eq:mixture-model}.
The set of observations assigned to component $k$, which we denote $X_{k}(z) = \{ \data{n} \mid z_{n} = k, n =1,\ldots,\numobs\}$, 
provides samples that are approximately distributed according to $P_{ok}$. 
So, we require an estimator $\discrest{X_{k}(z)}{F_{\phi_{k}}}$ which, when the empirical distribution of $X_{k}(z)$ 
converges to some limiting distribution $\tilde{P}_{k}$ as $\numobs \to \infty$, provides a consistent estimate of 
$\discr{\tilde{P}_{k}}{F_{\phi_{k}}}$. 

For clarity, we will often write $\allparam^{(\numcomps)}$, $X_{k}^{(\numcomps)}(z^{(\numcomps)})$, etc.\ to denote that these quantities are
associated with the mixture model with $\numcomps$ components.
Given $\discr{\cdot}{\cdot}$, $\discrest{\cdot}{\cdot}$, and $\rho$, we would like to design a loss function that will be minimized when $K = K_{o}$.
Hence, the loss function should not decrease if all estimated discrepancies are less than $\rho$. 
Based on this requirement, define the \emph{structurally aware loss}
\begin{equation}
	\mcR^{\rho}(\allparam^{(\numcomps)}; z^{(\numcomps)}, \data{1:\numobs}) = \sum_{k = 1}^{\numcomps} |X_{k}^{(\numcomps)}(z^{(\numcomps)})|\max(0, \discrest{X_{k}^{(\numcomps)}(z^{(\numcomps)})}{F_{\param_{k}^{(\numcomps)}}) - \rho}. \label{eq:general-modified-loglik}
\end{equation}
By construction, the structurally aware loss is nonnegative and equal to zero (and therefore minimized) if $\discrest{X_{k}^{(\numcomps)}(z^{(\numcomps)})}{F_{\param_{k}^{(\numcomps)}}} < \rho$ for all $k$.
On the one hand, if $\lim_{\numobs \to \infty}\discrest{X_{k}^{(\numcomps)}(z^{(\numcomps)})}{F_{\param_{k}^{(\numcomps)}}}> \rho$ for some $k$, then the scaling by $|X_{k}^{(\numcomps)}(z^{(\numcomps)})|$ ensures 
the loss tends to $\infty$ as $N \to \infty$. 
Therefore, by minimizing \cref{eq:general-modified-loglik} with respect to $K$, we can exclude $\numcomps < \numcomps_{o}$.

On the other hand, to asymptotically rule out $\numcomps > \numcomps_{o}$, we 
introduce an Akaike information criterion-like penalty term, leading to the \emph{penalized structurally aware loss}
\begin{equation}
	\mathcal{R}^{\rho,\lambda}(\allparam^{(\numcomps)}; z^{(\numcomps)}, \data{1:\numobs}) = \mathcal{R}^{\rho}(\allparam^{(\numcomps)}; z^{(\numcomps)},\data{1:\numobs}) + \lambda \numcomps, \label{eq:pen-general-modified-loglik}
\end{equation}
where $\lambda > 0$ controls the strength of the penalty. %
Given choices for $\lambda$, $\discr{\cdot}{\cdot}$, $\discrest{\cdot}{\cdot}$, and the maximum
number of components to consider,  $\numcomps_{\max}$, our robust, structurally aware model selection procedure is as follows:

\begin{algorithm}[H]
	\SetKwInOut{Input}{Input}
	\SetKwInOut{Output}{Output}

	\For{$\numcomps=1,\ldots, \numcomps_{\max}$ }{
		Obtain parameter estimate $\hat{\allparam}^{(\numcomps)} \in \Theta^{(\numcomps)}$ \\
			Sample $z^{(\numcomps)}$ from $p(z^{(\numcomps)} \mid \hat{\allparam}^{(\numcomps)}; x_{1:\numobs})$
	}
	If necessary, determine an appropriate value for $\rho$ \\
	Compute $\hat{\numcomps} = \argmin_{\numcomps} \mathcal{R}^{\rho,\lambda}(\hat\allparam^{(\numcomps)}; z^{(\numcomps)}, \data{1:\numobs})$
	
	\Return{$\hat{\allparam}^{(\hat{\numcomps)}}$}
	\caption{Robust, structurally aware model selection}
	\label{algo:model-selection}
	
\end{algorithm}

We next turn to rigorously justifying our proposed approach by proving a first-of-its-kind consistency result (\cref{sec:theory}).
We then provide detailed guidance on using our approach in practice (\cref{sec:guidance}).

\section{Model Selection Consistency} %
\label{sec:theory}

\subsection{A General Consistency Result}

We now show that our model selection procedure consistently estimates $K_{o}$ under reasonable assumptions.
First, we consider the requirements for $\discr{\cdot}{\cdot}$ and $\discrest{\cdot}{\cdot}$.
Some discrepancies are finite only under strong conditions. 
For example, for the Kullback--Leibler divergence, $\discr{P}{Q} < \infty$ only if $P$ is absolutely continuous with respect to $Q$.
However, we must work with the discrete estimates of each component, so such absolute continuity conditions may not hold.
Therefore, we introduce a possibly weaker metric $d$ on probability measures that detects empirical convergence. 
\begin{assumption}
	For $y_{\numobs,n} \in \mathcal{X}$ $(\numobs = 1,2,\dots; n = 1,\dots,\numobs)$, 
	define the empirical distribution $\hat{P}_{\numobs} = \numobs^{-1}\sum_{n=1}^{\numobs}\delta_{y_{N,n}}$ 
	and assume $\hat{P}_{\numobs} \rightarrow P$ in distribution. 
	The metric $d$, discrepancy $\mathcal{D}$, and estimator $\hat{\mathcal{D}}$ satisfy the following conditions:
	\begin{enumerate}[label = (\alph*),ref=(\theassumption-\alph*)]
		\item \emph{The metric detects empirical convergence:} $d(\hat{P}_\numobs, P) \to 0$ as $\numobs \to \infty$. 
		\item  \emph{The metric is jointly convex in its arguments:} for all $w \in (0, 1)$ and distributions $P$, $P'$, $Q$, $Q'$, 
		\[
		d(wP + (1 - w)P', wQ + (1 - w)Q') \le w\,d(P, Q) + (1 - w)\,d(P', Q').
		\]
		\item  \emph{The discrepancy estimator is consistent:} For any distributions $P, Q$, 
		if $\discr{P}{Q} < \infty$ and $\hat P_{N} \to P$ in distribution, 
		then $\discrest{\hat{P}_{\numobs}}{Q} \to \discr{P}{Q}$ as $\numobs \to \infty$. 
		\item \emph{Smoothness of the discrepancy estimator:} The map $\param \mapsto \discrest{\hat{P}_{\numobs}}{F_{\param}}$ is continuous. 
		\item  \emph{The discrepancy bounds the metric:} there exists a continuous, 
		non-decreasing function $\phi : \mathbb{R} \to \mathbb{R}$ such that 
		$d(P,Q) \leq \phi(\discr{P}{Q})$ for all distributions $P, Q$. 
		\item \emph{The metric between components is finite:} For all $\param, \param' \in \Theta$, 
		we have $d(F_{\param}, F_{\param'}) < \infty$. 
	\end{enumerate}	
	\label{assump:metric-discr-conditions}
\end{assumption}

A wide variety of metrics satisfy \cref{assump:metric-discr-conditions}(a), including the bounded Lipschitz metric, 
the Kolmogorov metric, maximum mean discrepancies with sufficiently regular bounded kernels, and the Wasserstein metric with a bounded cost function \citep{Vaart:1996,Sriperumbudur:2010,SimonGabriel:2018,Villani:2009}. 
\Cref{assump:metric-discr-conditions}(b) also holds for a range of metrics. 
For example, it is easy to show that all integral probability metrics -- which includes the bounded Lipschitz 
metric, maximum mean discrepancy, and 1-Wasserstein distance -- are jointly convex (see Lemma S1 in the Supplementary Materials). 
\cref{assump:metric-discr-conditions}(c) is a natural requirement that the divergence estimator is consistent. 
Such estimators are well studied for many common discrepancies.
We discuss consistent estimation of the Kullback--Leibler divergence in \cref{sec:kl-estimation}.
\cref{assump:metric-discr-conditions}(d) will typically hold as long as the map $\phi \mapsto F_{\phi}$ is 
well-behaved.
For example, for the Kullback--Leibler divergence estimators described in \cref{sec:kl-estimation} 
and standard maximum mean discrepancy estimators \citep{Gretton:2012,Krause:2023}, 
when $F_{\phi}$ admits a density $f_{\phi}$, it suffices for the map $\phi \mapsto f_{\phi}(x)$ 
to be continuous for $P_{o}$-almost every $x$. 
\cref{assump:metric-discr-conditions}(e) is not overly restrictive. See \citet{Gibbs:2002} for a extensive overview 
of the relationships between common metrics and divergences. 
\cref{assump:metric-discr-conditions}(f) trivially holds for bounded metrics such as bounded Lipschitz metric and integral probability measures with uniformly bounded test functions.

Our second assumption requires the inference algorithm to be sufficiently regular, 
in the sense that, for each fixed number of components $\numcomps$, the parameter estimate should consistently estimate an asymptotic parameter $\allparam_{\star}^{(K)}$.

\begin{assumption}\label{assump:inference-regularity}
	For each $\numcomps > 0$, there exists $\theta_{\star}^{(K)} \in \Theta^{(K)}$ such that
	the inference algorithm estimate $\hat\theta^{(K)} \to \theta_{\star}^{(K)}$
	in probability as $N \rightarrow \infty$, after possibly reordering of components.
\end{assumption}
To simplify notation, we will write $F^{(K)}_{\star k} = F_{\param^{(K)}_{\star k}}$ and $G^{(K)}_{\star k} = G_{\allparam^{(K)}_{\star k}}$, and similarly for their densities. 
\Cref{assump:inference-regularity} holds for most reasonable algorithms, including expectation--maximization, point estimates based on the posterior distribution,
and variational inference \citep{Balakrishnan:2017,Walker:2001,Wang:2019}.
Note that we assume that consistency holds for parameters in the equivalence class induced by component reordering, although we keep
this equivalence implicit in the discussion that follows. 
\Cref{assump:inference-regularity} implies that the empirical data distribution of the $k$th component, 
$\hat{P}^{(K)}_k = |X_{k}(z^{(K)})|^{-1}\sum_{x \in X_{k}(z^{(K)})} \delta_{x}$,
converges to a limiting distribution
\begin{align}
	\tilde{P}^{(K)}_{k}(\dee x) &= \frac{p_{\star}^{(K)}(k\mid x)P_o(\dee x) }{\int p_{\star}^{(K)}(k\mid y)P_o(\dee y)}, \label{eq:phat-def} 
\end{align}
where $p_{\star}^{(K)}(k \mid x) = \pi^{(K)}_{\star k}f_{\star k}^{(K)}(x)/g^{(K)}_{\star}(x)$ is the 
conditional component probability under the limiting model distribution.

Our third assumption concerns the regularity of the data distribution and model.
\begin{assumption}[$\rho$] \label{assump:model-conditions}
	The data-generating distribution $P_o$ and component model family $\mcF$ satisfy the following conditions:
	\begin{enumerate}[label = (\alph*),ref=(\theassumption-\alph*)]
		\item \emph{Meaningful decomposition of the data-generating distribution:} for positive integer $\numcomps_{o}$, $\pi_{o} \in \Delta_{\numcomps_o}$, and distributions $P_{o1},\dots,P_{o\numcomps_{o}}$ on $\mcX$, it holds that $P_{o} = \sum_{k = 1}^{K_o}\pi_{ok}P_{ok}$.
		\item \emph{Accuracy of the parameter estimates for $\numcomps = \numcomps_{o}$:} for each $k=1, \ldots, \numcomps_o$, it holds that 
		$\discr{\tilde{P}_{k}^{(K_{o})} }{F^{(K_{o})}_{\star k}} < \rho$.
		\item \emph{Poor model fit when $\numcomps$ is too small:} for any $\numcomps < \numcomps_o$, it holds that 
		$d(P_o, G_{\star}^{(\numcomps)}) > \phi(\rho) $.
	\end{enumerate}	
\end{assumption}

\cref{assump:model-conditions}(a) formalizes the decomposition of the data-generating distribution into the subpopulations/types/groups that  we aim to recover.
\cref{assump:model-conditions}(b) requires that, when the number of components is correctly specified,
the divergence between the asymptotic empirical component distribution $\tilde{P}^{(\numcomps_{o})}_{k}$ and the asymptotic component  
$F_{\star k}^{(\numcomps_{o})}$ is small for each component $k=1,\ldots,\numcomps_{o}$.
In \cref{subsec:theory-rho-bounds}, we give conditions under which $\discr{P_{ok}^{(K_{o})} }{F^{(K_{o})}_{\star k}} < \rho_{o}$ 
implies the assumption holds for the Kullback--Leibler divergence and integral probability metrics with bounded test functions. %
\cref{assump:model-conditions}(c) formalizes the intuition that model selection will only be successful if, when the number of components is smaller than the true generating process,
the mixture model is a poor fit to the data. 
The necessary degree of mismatch depends on the match between the true and estimated component distributions, which depends on the choice of $\mcD$ and is measured by $\rho$,
and the relationship between $\mathcal{D}$ and $d$, which is described by $\phi$.

The following result provides a general framework to establish when our method consistently estimates $K_{o}$. %
For notational clarity, let $\hat{\numcomps}_{\numobs}(\rho) = \hat{\numcomps} = \argmin_{\numcomps} \mathcal{R}^{\rho,\lambda}(\hat\allparam^{(\numcomps)}; z^{(\numcomps)}, \data{1:\numobs})$ denote the structurally aware robust model choice. 

\begin{theorem} \label{thm:main}
	If \crefrange{assump:metric-discr-conditions}{assump:model-conditions}($\rho$) hold, then $\Pr\{\hat{\numcomps}_{\numobs}(\rho) = K_{o}\} \to 1$ as $N \to \infty$. 
\end{theorem}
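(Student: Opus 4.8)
The plan is to partition the candidate values $\numcomps \in \{1,\dots,\numcomps_{\max}\}$ into the three regimes $\numcomps < \numcomps_{o}$, $\numcomps = \numcomps_{o}$, and $\numcomps > \numcomps_{o}$, and to show that with probability tending to one the penalized loss $\mcR^{\rho,\lambda}$ evaluated at $\numcomps_{o}$ is strictly smaller than at every competing candidate, so that the minimizer $\hat{\numcomps}_{\numobs}(\rho)$ equals $\numcomps_{o}$. Because the algorithm only searches over the finite set $\{1,\dots,\numcomps_{\max}\}$, it suffices to control each competing $\numcomps$ separately and then take a finite union bound; no uniform-in-$\numcomps$ estimate is required.

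I would first pin down the anchor value $\numcomps = \numcomps_{o}$. By \cref{assump:inference-regularity} and the convergence of the empirical component distributions recorded just after it, $\hat{P}_{k}^{(\numcomps_{o})} \to \tilde{P}_{k}^{(\numcomps_{o})}$ in distribution and $\hat\param_{k}^{(\numcomps_{o})} \to \param_{\star k}^{(\numcomps_{o})}$ for each $k$. Combining consistency of the estimator in its first argument (\cref{assump:metric-discr-conditions}(c)) with its continuity in the component parameter (\cref{assump:metric-discr-conditions}(d)) gives $\discrest{\hat{P}_{k}^{(\numcomps_{o})}}{F_{\hat\param_{k}^{(\numcomps_{o})}}} \to \discr{\tilde{P}_{k}^{(\numcomps_{o})}}{F_{\star k}^{(\numcomps_{o})}}$, and by \cref{assump:model-conditions}(b) this limit is strictly below $\rho$. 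Hence, with probability tending to one, every summand of $\mcR^{\rho}$ has a vanishing $\max(0,\cdot-\rho)$ term, so $\mcR^{\rho}(\hat\allparam^{(\numcomps_{o})};\dots) = 0$ and therefore $\mcR^{\rho,\lambda}(\hat\allparam^{(\numcomps_{o})};\dots) = \lambda \numcomps_{o}$.

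The regime $\numcomps > \numcomps_{o}$ is then immediate and needs no analysis of the fit: since $\mcR^{\rho} \ge 0$ by construction and $\lambda > 0$, we have $\mcR^{\rho,\lambda}(\hat\allparam^{(\numcomps)};\dots) \ge \lambda\numcomps \ge \lambda(\numcomps_{o}+1) > \lambda\numcomps_{o}$, which exceeds the value at $\numcomps_{o}$ on the event just described. For $\numcomps < \numcomps_{o}$ I would exploit the scaling by cluster sizes: writing $\mcR^{\rho}(\hat\allparam^{(\numcomps)};\dots) = \numobs\sum_{k} (|X_{k}^{(\numcomps)}|/\numobs)\max(0,\discrest{X_{k}^{(\numcomps)}}{F_{\hat\param_{k}^{(\numcomps)}}} - \rho)$, the average converges in probability to $\sum_{k}\tdpi_{k}^{(\numcomps)}\max(0,\discr{\tilde{P}_{k}^{(\numcomps)}}{F_{\star k}^{(\numcomps)}} - \rho)$, where $\tdpi_{k}^{(\numcomps)} = \int p_{\star}^{(\numcomps)}(k\mid y)P_o(\dee y)$ is the limiting assignment proportion. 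If this limit is strictly positive, then $\mcR^{\rho}(\hat\allparam^{(\numcomps)};\dots) \to \infty$ in probability and in particular beats $\lambda\numcomps_{o}$. To show positivity I argue by contraposition: if $\discr{\tilde{P}_{k}^{(\numcomps)}}{F_{\star k}^{(\numcomps)}} \le \rho$ for every $k$, then \cref{assump:metric-discr-conditions}(e) gives $d(\tilde{P}_{k}^{(\numcomps)},F_{\star k}^{(\numcomps)}) \le \phi(\rho)$, and joint convexity (\cref{assump:metric-discr-conditions}(b)) applied to the decomposition $P_o = \sum_{k}\tdpi_{k}^{(\numcomps)}\tilde{P}_{k}^{(\numcomps)}$ yields $d(P_o, \sum_{k}\tdpi_{k}^{(\numcomps)}F_{\star k}^{(\numcomps)}) \le \phi(\rho)$, contradicting \cref{assump:model-conditions}(c).

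The step I expect to be the main obstacle is exactly this final contradiction, because the mixture produced by joint convexity carries the limiting assignment proportions $\tdpi_{k}^{(\numcomps)}$, whereas \cref{assump:model-conditions}(c) is stated for $G_{\star}^{(\numcomps)} = \sum_{k}\pi_{\star k}^{(\numcomps)}F_{\star k}^{(\numcomps)}$, which uses the limiting model weights. These two weight vectors coincide precisely when the inference limit is a stationary point of the population objective, i.e.\ when $\pi_{\star k}^{(\numcomps)} = \int p_{\star}^{(\numcomps)}(k\mid y)P_o(\dee y) = \tdpi_{k}^{(\numcomps)}$ (the responsibility fixed point satisfied by the expectation--maximization / Kullback--Leibler-projection limit); under this identification $\sum_{k}\tdpi_{k}^{(\numcomps)}F_{\star k}^{(\numcomps)} = G_{\star}^{(\numcomps)}$ and the contradiction with \cref{assump:model-conditions}(c) is genuine. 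I would therefore record this weight-matching identity as a consequence of \cref{assump:inference-regularity}, noting that it is also what makes $|X_{k}^{(\numcomps)}|/\numobs \to \pi_{\star k}^{(\numcomps)}$ and hence underlies the average-of-responsibilities limit used above. A secondary technical point is the joint passage to the limit in the $\numcomps = \numcomps_{o}$ step, which may require a locally uniform version of \cref{assump:metric-discr-conditions}(c)--(d) so that consistency in the empirical argument and continuity in the moving parameter $\hat\param_{k}^{(\numcomps_{o})}$ can be combined.
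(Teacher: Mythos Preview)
Your overall plan mirrors the paper's: same three-regime split, same handling of $\numcomps = \numcomps_o$ and $\numcomps > \numcomps_o$, and the same contradiction strategy for $\numcomps < \numcomps_o$. Where you diverge is in how the contradiction is reached. You work directly at the population level, using the exact identity $P_o = \sum_k \tilde\pi_k^{(\numcomps)} \tilde P_k^{(\numcomps)}$ together with joint convexity and Assumption~1(e) to obtain $d\bigl(P_o,\sum_k \tilde\pi_k^{(\numcomps)} F_{\star k}^{(\numcomps)}\bigr) \le \phi(\rho)$ in one stroke. The paper instead routes through the empirical measure: it bounds $d(G_\star^{(\numcomps)},P_o)$ by $d(G_\star^{(\numcomps)},\hat P^{(\numcomps,\numobs)}) + d(\hat P^{(\numcomps,\numobs)},P_o)$, invokes Assumption~1(a) for the second piece, and further splits the first into a weight-mismatch term (handled by a fairly elaborate joint-convexity computation that relies on Assumption~1(f)) and a component-mismatch term bounded by $\phi(\rho)$. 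Your route is shorter and does not touch Assumptions~1(a) or~1(f) in this step. The obstacle you flag---that your mixture carries the responsibility weights $\tilde\pi_k^{(\numcomps)} = \int p_\star^{(\numcomps)}(k\mid y)\,P_o(\dee y)$ rather than the model weights $\pi_{\star k}^{(\numcomps)}$---is genuine, and the paper's argument depends on exactly the same identification: its weight-mismatch term is $o_P(1)$ only because it asserts (without proof) that the empirical cluster proportions converge to $\pi_{\star k}^{(\numcomps)}$, i.e.\ that $\tilde\pi_k^{(\numcomps)} = \pi_{\star k}^{(\numcomps)}$. So your diagnosis of the crux is correct; neither argument closes without this responsibility fixed point, which is natural for EM/KL-projection limits but is not a formal consequence of Assumption~2 as stated.
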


To prove \cref{thm:main} we establish two facts.
First, the (unregularized) structurally aware loss satisfies $\Pr\{\mcR^{\rho}(\allparam^{(\numcomps_{o})}; z^{(\numcomps_{o})}, \data{1:\numobs}) = 0 \} \to 1$ as $\numobs \to \infty$.
Second, for $\numcomps < \numcomps_o$,  $\mcR^{\rho}(\allparam^{(\numcomps)}; z^{(\numcomps)}, \data{1:\numobs}) \to \infty$ in probability
as $\numobs \to \infty$ 
Therefore, for $\numcomps > \numcomps_o$, asymptotically 
$\mcR^{\rho,\lambda}(\allparam^{(\numcomps)}; z^{(\numcomps)}, \data{1:\numobs})  \ge \mcR^{\rho,\lambda}(\allparam^{(\numcomps_{o})}; z^{(\numcomps_{o})}, \data{1:\numobs})  + \lambda (\numcomps - \numcomps_{o})$, so the minimum is asymptotically attained at $\hat{\numcomps}=\numcomps_o$.

\Cref{thm:main} leaves two important questions open.
First, what choices of $\mcD$ and $d$ can satisfy \cref{assump:metric-discr-conditions} (and for what choice of $\phi$)?
While \cref{assump:model-conditions} requires that $\discr{\tilde{P}_{k}^{(K_{o})}}{F^{(K_{o})}_{\star k}} < \rho$,
it is more natural to require a bound on $\discr{P_{ok}}{F^{(K_{o})}_{\star k}}$.
So, the second question is, Does a bound on the latter imply the former?
We address each of these questions in turn. 
\subsection{Choice of Divergence} 
There are many possible choices for the divergence, including the Kullback--Leibler divergence \citep{Joyce:2011}, 
maximum mean discrepancy \citep{Sriperumbudur:2010}, Hellinger distance \citep{Gibbs:2002},
or Wasserstein metric \citep{Villani:2009}. 
We next show that both the Kullback--Leibler divergence and mean maximum discrepancy can satisfy \cref{assump:metric-discr-conditions}(a,b,e) with an appropriate choice of $d$. 
Specifically, we will make use of integral probability metrics.
Given a collection $\mcH$ of real-valued functions on $\mcX$, 
the corresponding integral probability metric is 
\[ \label{eq:IPM}
d_{\mcH}(P,Q) = \sup_{h \in \mcH}\left|\int h(x)P(\dee x) - \int h(y) Q(\dee y)\right|.
\]

If $\mcD$ is the Kullback--Leibler divergences, we choose $d$ to be the bounded Lipschitz metric.
Assume that $\mcX$ is equipped with a metric $m$ and define the bounded Lipschitz norm $\BLnorm{h} = \Vert h \Vert_{\infty} + \Vert h \Vert_{L}$, where $\Vert{h}\Vert_{L} = \sup_{x \ne y}|h(x) - h(y)|/m(x, y)$ and $\Vert{h}\Vert_{\infty} = \sup_{x} |h(x)|$.
Letting $\mcH = \mcH_{\mathrm{BL}} = \{h : \Vert h \Vert_{\mathrm{BL}}\le 1\}$ gives the bounded Lipschitz metric
$\blmetric = d_{\mcH_{\mathrm{BL}}}$.

\begin{proposition} 	\label{coro:KL}
	If $d(P, Q) = \blmetric(P,Q)$ and $\mathcal{D}(P \mid Q) = \kl{P}{Q}$, then 
	\cref{assump:metric-discr-conditions}(a,b,e) holds with $\phi(\rho) = (\rho/2)^{1/2}$.
\end{proposition}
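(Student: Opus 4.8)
The plan is to verify the three conditions separately, exploiting the fact that \cref{assump:metric-discr-conditions}(a) and (b) are intrinsic properties of the bounded Lipschitz metric $\blmetric$ that do not involve the Kullback--Leibler divergence at all, whereas (e) is the only place the metric and the divergence interact. For (a) and (b) I would simply exhibit $\blmetric$ as a well-behaved integral probability metric and quote/recall standard facts; for (e) I would chain the elementary domination of $\blmetric$ by total variation together with Pinsker's inequality.

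For \cref{assump:metric-discr-conditions}(a), the key fact is classical: the bounded Lipschitz metric metrizes weak convergence on Borel probability measures over a separable metric space $(\mcX, m)$ (see, e.g., \citealp{Vaart:1996}). Since the assumption already furnishes $\hat{P}_{\numobs} \to P$ in distribution, this immediately yields $\blmetric(\hat{P}_{\numobs}, P) \to 0$, establishing (a). For \cref{assump:metric-discr-conditions}(b), I would use that $\blmetric = d_{\mcH_{\mathrm{BL}}}$ is an integral probability metric and invoke the general convexity of such metrics (Lemma S1 in the Supplementary Materials). The underlying observation is that for each fixed test function $h$ the map $(P,Q) \mapsto \int h\,\dee P - \int h\,\dee Q$ is affine, so that its value at $(wP + (1-w)P', wQ + (1-w)Q')$ splits as the corresponding convex combination; taking absolute values, then the supremum over $h \in \mcH_{\mathrm{BL}}$, and using subadditivity of the supremum delivers joint convexity.

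The substantive step is \cref{assump:metric-discr-conditions}(e). The plan is to bound $\blmetric$ by the total variation distance and then apply Pinsker. Every $h \in \mcH_{\mathrm{BL}}$ satisfies $\infnorm{h} \le \BLnorm{h} \le 1$; since $P$ and $Q$ are probability measures, $\int h\,\dee(P - Q)$ is unchanged under centering $h$ by a constant, so controlling the oscillation of $h$ gives $\blmetric(P, Q) \le \norm{P - Q}_{\mathrm{TV}}$. Pinsker's inequality then yields $\norm{P - Q}_{\mathrm{TV}} \le (\kl{P}{Q}/2)^{1/2}$, and composing the two bounds gives $\blmetric(P,Q) \le (\kl{P}{Q}/2)^{1/2}$. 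Taking $\phi(t) = (\max(t,0)/2)^{1/2}$ -- continuous and non-decreasing on all of $\mathbb{R}$, and agreeing with $(\rho/2)^{1/2}$ on the nonnegative reals where the Kullback--Leibler divergence lives -- then verifies (e). I expect the only real obstacle here to be bookkeeping the normalization conventions: depending on whether total variation is taken in its $[0,1]$ (probabilist) form or its $L^1$ form, and on the precise definition of $\BLnorm{\cdot}$, the constant relating $\blmetric$ to $\norm{\cdot}_{\mathrm{TV}}$ can absorb a factor of two, which then propagates into the constant inside $\phi$. Applying the centering step so that the oscillation of $h$, rather than merely its sup norm, controls the total variation bound is precisely what pins down the stated constant $(\rho/2)^{1/2}$.
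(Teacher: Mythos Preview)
Your proposal is correct and follows essentially the same route as the paper's proof: (a) via the fact that $\blmetric$ metrizes weak convergence, (b) via the joint convexity of integral probability metrics (Lemma~S1), and (e) via $\blmetric \le d_{\mathrm{TV}}$ followed by Pinsker's inequality. The paper simply cites \citet{Wellner:1981} for (a) and \citet{Gibbs:2002} for the chain in (e), whereas you spell out the centering argument for the constant; the substance is identical.
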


If we choose the divergence to be a mean maximum discrepancy, $d$ can be the same.
Let $\mcK : \mcX \times \mcX \to \mathbb{R}$ denote a positive definite kernel. 
Denote the reproducing kernel Hilbert space with kernel $\mcK$ as $\mcH_{\mcK}$.
Denote its inner product by $\langle{\cdot},{\cdot}\rangle_{\mcK}$ and norm by $\|{\cdot}\|_{\mcK}$. 
Letting $\mcH = \mcB_{\mcK} = \{ h \in \mcH_{\mcK} :  \|h\|_{\mcK} \le 1\}$, the unit ball, gives the
mean maximum discrepancy $d_{\mathrm{MMD}} = d_{\mcB_{\mcK}}$.
\begin{proposition} 	\label{coro:MMD}
	If $\mcK$ is chosen such that $d_{\mathrm{MMD}}$ metrizes weak convergence, and 
	$d(P, Q) = \mathcal{D}(P \mid Q) = d_{\mathrm{MMD}}(P,Q)$, 
	then \cref{assump:metric-discr-conditions}(a,b,e) holds with $\phi(\rho) = \rho$.
\end{proposition}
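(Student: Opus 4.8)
The plan is to verify each of conditions (a), (b), and (e) of \cref{assump:metric-discr-conditions} in turn, taking advantage of the fact that here the metric and the discrepancy coincide, $d = \mcD = d_{\mathrm{MMD}}$, which collapses condition (e) to a triviality. The structure mirrors the argument for \cref{coro:KL}, except that no analytic inequality is needed to relate $d$ to $\mcD$.

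First, for condition (a), I would invoke the standing hypothesis that the kernel $\mcK$ is chosen so that $d_{\mathrm{MMD}}$ metrizes weak convergence. Since convergence in distribution is precisely weak convergence of probability measures, the assumption $\hat P_{\numobs} \to P$ in distribution is, by the very definition of metrization, equivalent to $d_{\mathrm{MMD}}(\hat P_{\numobs}, P) \to 0$. Hence (a) holds immediately. The only substantive content lies in the kernel condition itself, which is imported as a hypothesis; I would point the reader to the sufficient conditions on $\mcK$ (bounded, continuous, characteristic kernels of the appropriate type) under which $d_{\mathrm{MMD}}$ is known to metrize weak convergence.

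Second, for condition (b), I would observe that the maximum mean discrepancy is the integral probability metric $d_{\mathrm{MMD}} = d_{\mcB_{\mcK}}$ associated with the RKHS unit ball $\mcB_{\mcK}$. Joint convexity then follows from the general fact that every integral probability metric is jointly convex (Lemma S1 in the Supplementary Materials). If a self-contained argument is preferred, I would note that for any test-function class $\mcH$, the quantity $d_{\mcH}(wP + (1-w)P', wQ + (1-w)Q')$ expands, by linearity of the integral, into a supremum over $h \in \mcH$ of $|w(\int h\,\dee P - \int h\,\dee Q) + (1-w)(\int h\,\dee P' - \int h\,\dee Q')|$; applying the triangle inequality inside the supremum and then splitting the supremum bounds this by $w\,d_{\mcH}(P,Q) + (1-w)\,d_{\mcH}(P',Q')$.

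Third, for condition (e), since $d(P,Q) = \discr{P}{Q} = d_{\mathrm{MMD}}(P,Q)$ by assumption, the required inequality $d(P,Q) \le \phi(\discr{P}{Q})$ holds with equality upon taking $\phi$ to be the identity, $\phi(\rho) = \rho$, which is continuous and non-decreasing as \cref{assump:metric-discr-conditions}(e) demands. Unlike the Kullback--Leibler case of \cref{coro:KL}, where the nontrivial ingredient is Pinsker's inequality relating $\blmetric$ to the divergence (yielding $\phi(\rho) = (\rho/2)^{1/2}$), here there is no genuine analytic obstacle: all the difficulty has been absorbed into the hypothesis that the kernel makes $d_{\mathrm{MMD}}$ metrize weak convergence. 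The only points requiring care are to state clearly that this kernel condition is exactly what supplies (a), and to confirm that the identity choice of $\phi$ satisfies the continuity and monotonicity regularity required.
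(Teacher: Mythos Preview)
Your proposal is correct and matches the paper's own proof essentially line for line: (a) by the metrization hypothesis, (b) by the joint convexity of integral probability metrics (Lemma~S1), and (e) trivially with $\phi(\rho)=\rho$. The paper additionally remarks that (f) holds when the kernel is bounded, but since the proposition as stated only claims (a,b,e), your treatment is complete.
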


For conditions under which $d_{\mathrm{MMD}}$ metrizes weak convergence, see 
\citet{Sriperumbudur:2010,Simon:2020}.

\subsection{Bounding the Component-level Discrepancy}
\label{subsec:theory-rho-bounds}

\Cref{assump:model-conditions}($\rho$)(b), requires the discrepancy between the limiting empirical component distribution and the model component 
be less than $\rho$. %
However, such a requirement is not completely satisfactory since $\tilde{P}_{k}$ depends on both $P_{o}$ and the mixture model family. 
A more intuitive and natural assumption would bound the divergence between true component distribution and model component -- that is, be of the 
form $\mathcal{D}(P_{ok} \mid F_{\star k}^{(\numcomps_{o})}) < \rho_o$ for some $\rho_{o} > 0$. 
In this section, we show how to relate $\rho_o$ to $\rho$ for the Kullback--Leibler divergence and integral probability metrics, including the maximum mean discrepancy.

Define the conditional component probabilities for the true generating distribution as
\begin{equation}
p_{o}(k \mid x) = \frac{\pi_{ok}p_{ok}(x)}{p_{o}(x)}.
\end{equation}

We rely on the following assumption.
\begin{assumption} \label{assump:bounded-ratios}
	There exist constants $\epsilon_{\pi}, \epsilon_{z}>0$ such that for all 
	$k=1,\ldots, \numcomps_{o}$,
	
\begin{equation}
	\frac{\pi_{\star k} }{\pi_{ok} }  < 1+\epsilon_{\pi}, \qquad
\frac{p_{\star}^{(\numcomps_{o})}(k \mid x)}{p_{o}(k \mid x)}  < 1+\epsilon_{z}.
\end{equation}

\end{assumption}
We first consider the Kullback--Leibler divergence case.
\begin{proposition} \label{prop:kl-upper-bound}
	If $\mathcal{D}(P \mid Q) = \kl{P}{Q}$, \cref{assump:bounded-ratios} holds, 
	and $\kl{P_{ok}}{F_{\star k}^{(\numcomps_{o})}} < \rho_{o}$ for $k=1,\ldots, \numcomps_{o}$, 
	then \Cref{assump:model-conditions}($\rho$)(b) holds for $\rho = (1+\epsilon_{\pi})(1+\epsilon_{z})[\rho_{o} + \log(1+\epsilon_{\pi})(1+\epsilon_{z})]$. 
\end{proposition}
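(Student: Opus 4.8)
The plan is to realize both the limiting empirical component distribution $\tilde{P}_{k}^{(\numcomps_{o})}$ and the true component $P_{ok}$ as reweightings of the common data-generating distribution $P_{o}$, and then transfer the hypothesis $\kl{P_{ok}}{F^{(\numcomps_{o})}_{\star k}} < \rho_{o}$ to the desired bound on $\kl{\tilde{P}_{k}^{(\numcomps_{o})}}{F^{(\numcomps_{o})}_{\star k}}$ by controlling the change of measure between the two reweightings. First I would compute the Radon--Nikodym derivative. Using the definition of $\tilde{P}_{k}^{(\numcomps_{o})}$ in \cref{eq:phat-def} together with the identity $\pi_{ok}P_{ok}(\dee x) = p_{o}(k\mid x)\,P_{o}(\dee x)$ implied by \cref{assump:model-conditions}($\rho$)(a) and the definition of $p_{o}(k\mid x)$, one obtains
\[
\frac{\dee\tilde{P}_{k}^{(\numcomps_{o})}}{\dee P_{ok}}(x)
= \frac{\pi_{ok}}{Z_{k}}\,\frac{p_{\star}^{(\numcomps_{o})}(k\mid x)}{p_{o}(k\mid x)},
\qquad
Z_{k} := \int p_{\star}^{(\numcomps_{o})}(k\mid y)\,P_{o}(\dee y).
\]
I would then bound this ratio by $W := (1+\epsilon_{\pi})(1+\epsilon_{z})$: the responsibility factor is at most $1+\epsilon_{z}$ directly by \cref{assump:bounded-ratios}, while the normalizing-constant factor $\pi_{ok}/Z_{k}$ is controlled through $1+\epsilon_{\pi}$ after relating $Z_{k}$ to $\pi_{\star k}$ (for maximum-likelihood--type limits the stationarity identity $Z_{k} = \pi_{\star k}$ holds, and the mixing-weight bound then applies).

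With the likelihood ratio bounded, I would use the chain rule $\kl{\tilde{P}_{k}^{(\numcomps_{o})}}{F^{(\numcomps_{o})}_{\star k}} = \kl{\tilde{P}_{k}^{(\numcomps_{o})}}{P_{ok}} + \int \log\!\big(\dee P_{ok}/\dee F^{(\numcomps_{o})}_{\star k}\big)\,\dee\tilde{P}_{k}^{(\numcomps_{o})}$. The entropy piece is immediate: $\kl{\tilde{P}_{k}^{(\numcomps_{o})}}{P_{ok}} = \int \log\!\big(\dee\tilde{P}_{k}^{(\numcomps_{o})}/\dee P_{ok}\big)\,\dee\tilde{P}_{k}^{(\numcomps_{o})} \le \log W$, since the integrand is pointwise at most $\log W$ and $\tilde{P}_{k}^{(\numcomps_{o})}$ is a probability measure. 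The whole argument then reduces to the change-of-measure estimate $\kl{\tilde{P}_{k}^{(\numcomps_{o})}}{F^{(\numcomps_{o})}_{\star k}} \le W\,\kl{P_{ok}}{F^{(\numcomps_{o})}_{\star k}} + W\log W$; inserting $\kl{P_{ok}}{F^{(\numcomps_{o})}_{\star k}} < \rho_{o}$ gives $\kl{\tilde{P}_{k}^{(\numcomps_{o})}}{F^{(\numcomps_{o})}_{\star k}} < W(\rho_{o}+\log W) = \rho$, which is exactly \cref{assump:model-conditions}($\rho$)(b).

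The hard part will be the cross term $\int \log\!\big(\dee P_{ok}/\dee F^{(\numcomps_{o})}_{\star k}\big)\,\dee\tilde{P}_{k}^{(\numcomps_{o})}$. Because the log-density ratio is sign-indefinite, the factor $W$ cannot simply be pulled outside the integral: doing so inflates the negative part of the integrand and leaves a spurious additive constant (of order $1$) that does not vanish in the well-specified limit $\epsilon_{\pi},\epsilon_{z}\to 0$. Indeed, the clean abstract inequality ``$\dee\mu'/\dee\mu \le W \Rightarrow \kl{\mu'}{F}\le W\,\kl{\mu}{F} + W\log W$'' is \emph{false} for arbitrary $\mu',\mu,F$, so the estimate must genuinely exploit that $\tilde{P}_{k}^{(\numcomps_{o})}$ and $P_{ok}$ are tiltings of the \emph{same} $P_{o}$ with the reweighting aligned to component $k$. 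To recover the sharp additive term $W\log W$ (so the bound correctly collapses to $\rho_{o}$ as $W\to 1$), I would work with the convex function $t\mapsto t\log t$ applied to $\dee\tilde{P}_{k}^{(\numcomps_{o})}/\dee F^{(\numcomps_{o})}_{\star k} = w_{k}\,\dee P_{ok}/\dee F^{(\numcomps_{o})}_{\star k}$ and use the two normalizations $\tilde{P}_{k}^{(\numcomps_{o})}(\mcX)=P_{ok}(\mcX)=1$ to cancel the uncontrolled negative contributions. Verifying this cancellation carefully, rather than through the naive likelihood-ratio bound, is the step I expect to demand the most care; the normalizer bound $\pi_{ok}/Z_{k}\le 1+\epsilon_{\pi}$ is comparatively routine bookkeeping once $Z_{k}$ is identified with $\pi_{\star k}$.
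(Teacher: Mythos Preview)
Your plan is essentially identical to the paper's proof. The paper rewrites $\tilde P_{k}^{(K_{o})}(\dee x)$ exactly as you do---as $\dfrac{\pi_{ok}}{\pi_{\star k}^{(K_{o})}}\cdot\dfrac{p_{\star}^{(K_{o})}(k\mid x)}{p_{o}(k\mid x)}\,P_{ok}(\dee x)$, with the normalizer $Z_{k}$ identified with $\pi_{\star k}^{(K_{o})}$---and then substitutes this into $\kl{\tilde P_{k}^{(K_{o})}}{F_{\star k}^{(K_{o})}}$ to obtain a single integral against $P_{ok}$ with integrand $w_{k}(x)\{\log w_{k}(x)+\log(\dee P_{ok}/\dee F_{\star k}^{(K_{o})})(x)\}$.

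The point of departure is precisely the cross term you flag. The paper takes the naive route you warn against: it simply replaces the Radon--Nikodym factor $w_{k}(x)$ by its upper bound $W=(1+\epsilon_{\pi})(1+\epsilon_{z})$ and $\log w_{k}(x)$ by $\log W$ throughout the integrand, citing only \cref{assump:bounded-ratios}. No additional structure of the mixture beyond $w_{k}\le W$ is invoked, and no separate treatment of the sign-indefinite term $\log(\dee P_{ok}/\dee F_{\star k}^{(K_{o})})$ is given. So the step you single out as ``the hard part'' is exactly where the paper's argument is at its loosest, and your observation that the abstract implication $\dee\mu'/\dee\mu\le W\Rightarrow\kl{\mu'}{F}\le W\,\kl{\mu}{F}+W\log W$ fails in general is a legitimate concern the paper does not address. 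If your aim is to reproduce the paper's argument, the naive bound is what is written there; if your aim is full rigor, the extra work you anticipate is indeed required and goes beyond what the paper supplies.
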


For integral probability measures, we will focus on the common case where the test functions are uniformly bounded. 
\begin{proposition} \label{prop:IPM-metric-upper-bound}
	If $\mathcal{D}(P \mid Q) = d_{\mcH}(P, Q)$, \cref{assump:bounded-ratios} holds, there exists $M > 0$ such that $\|h\|_{\infty} < M$ for all $h \in \mcH$, 
	and $d_{\mathcal{H}}(P_{ok},F^{(\numcomps_{o})}_{\opt k}) < \rho_{o}$ for $k=1,\ldots, \numcomps_{o}$, then
	\Cref{assump:model-conditions}($\rho$)(b) holds for $\rho = M\left(\epsilon_{\pi}+\epsilon_{z}+\epsilon_{\pi}\epsilon_{z}\right) +\rho_{o}$.
\end{proposition}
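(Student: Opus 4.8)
The plan is to control $d_{\mcH}(\tilde{P}_k^{(\numcomps_o)}, F_{\star k}^{(\numcomps_o)})$ --- recalling that $\mathcal{D}(\cdot\mid\cdot) = d_{\mcH}(\cdot,\cdot)$ here --- by the triangle inequality, inserting the true component $P_{ok}$:
\[
d_{\mcH}(\tilde{P}_k^{(\numcomps_o)}, F_{\star k}^{(\numcomps_o)}) \le d_{\mcH}(\tilde{P}_k^{(\numcomps_o)}, P_{ok}) + d_{\mcH}(P_{ok}, F_{\star k}^{(\numcomps_o)}).
\]
The second term is below $\rho_o$ by hypothesis, and the additive constant $\rho_o$ already appears in the target, so the whole problem reduces to the component-reweighting estimate $d_{\mcH}(\tilde{P}_k^{(\numcomps_o)}, P_{ok}) \le M(\epsilon_\pi + \epsilon_z + \epsilon_\pi\epsilon_z)$.

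First I would make the reweighting between the two measures explicit. Writing $P_{ok}(\dee x) = p_o(k\mid x)P_o(\dee x)/\pi_{ok}$ and using the definition \eqref{eq:phat-def} of $\tilde{P}_k^{(\numcomps_o)}$, the Radon--Nikodym derivative is
\[
\frac{\dee \tilde{P}_k^{(\numcomps_o)}}{\dee P_{ok}}(x) = \frac{\pi_{ok}}{\int p_\star^{(\numcomps_o)}(k\mid y)P_o(\dee y)}\cdot\frac{p_\star^{(\numcomps_o)}(k\mid x)}{p_o(k\mid x)}.
\]
The second factor is at most $1+\epsilon_z$ pointwise by \cref{assump:bounded-ratios}; the normalizing integral in the first factor is the effective weight of component $k$ under $P_o$, which I would identify with $\pi_{\star k}$ and bound against $\pi_{ok}$ using the weight-ratio part of \cref{assump:bounded-ratios}. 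Together these give the uniform density bound $\dee\tilde{P}_k^{(\numcomps_o)}/\dee P_{ok} \le (1+\epsilon_\pi)(1+\epsilon_z)$.

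With this in hand, the last step is a bounded-test-function argument: for any $h\in\mcH$,
\[
\int h\,\dee\tilde{P}_k^{(\numcomps_o)} - \int h\,\dee P_{ok} = \int h\left(\frac{\dee\tilde{P}_k^{(\numcomps_o)}}{\dee P_{ok}} - 1\right)\dee P_{ok}.
\]
The reweighting factor integrates to one against $P_{ok}$, so its deviation from $1$ has mean zero and is at most $(1+\epsilon_\pi)(1+\epsilon_z)-1$; combined with $\|h\|_\infty < M$ this bounds the right-hand side by $M[(1+\epsilon_\pi)(1+\epsilon_z)-1]$. Taking the supremum over $h\in\mcH$ and noting $(1+\epsilon_\pi)(1+\epsilon_z)-1 = \epsilon_\pi+\epsilon_z+\epsilon_\pi\epsilon_z$ completes the argument.

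I expect the middle step to be the main obstacle. The normalizer $\int p_\star^{(\numcomps_o)}(k\mid y)P_o(\dee y)$ is an average against the true law $P_o$ rather than the fitted mixture, so identifying it with $\pi_{\star k}$ and then turning the one-sided bound $\pi_{\star k}/\pi_{ok} < 1+\epsilon_\pi$ into the lower bound on the normalizer that the Radon--Nikodym estimate actually needs requires care; here I would lean on the simplex constraints $\sum_k p_\star^{(\numcomps_o)}(k\mid x) = \sum_k p_o(k\mid x) = 1$ and $\sum_k\pi_{\star k} = \sum_k\pi_{ok} = 1$. I would also verify the final constant carefully, since a crude total-variation bound on $d_{\mcH}(\tilde{P}_k^{(\numcomps_o)}, P_{ok})$ can introduce a spurious factor of two; the mean-zero centering of the reweighting factor is what keeps the constant at $M(\epsilon_\pi+\epsilon_z+\epsilon_\pi\epsilon_z)$ and should be exploited rather than the triangle-type $L^1$ bound. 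This mirrors the Kullback--Leibler case (\cref{prop:kl-upper-bound}), where the same density bound $(1+\epsilon_\pi)(1+\epsilon_z)$ enters, but through $\rho_o + \log[(1+\epsilon_\pi)(1+\epsilon_z)]$ rather than linearly.
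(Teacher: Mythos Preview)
Your approach is essentially the paper's: split via the triangle inequality through $P_{ok}$, rewrite $\tilde P_k^{(\numcomps_o)}$ as the reweighting $\frac{\pi_{ok}}{\pi_{\star k}}\cdot\frac{p_\star^{(\numcomps_o)}(k\mid x)}{p_o(k\mid x)}\,P_{ok}(\dee x)$ (this is exactly \cref{eq:rewrite-phat-k}, derived for \cref{prop:kl-upper-bound} and reused here), and bound $\left|\int h\,(w-1)\,\dee P_{ok}\right|$ using $\|h\|_\infty<M$ together with the ratio bounds from \cref{assump:bounded-ratios}. The normalizer identification you flag as the main obstacle is indeed made without comment in the paper, and your mean-zero remark is a mild sharpening of the paper's H\"older step, but the overall route is identical.
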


\cref{prop:IPM-metric-upper-bound} provides justification for a broad class of metrics including the bounded Lipschitz metric, the total variation distance,
and the Wasserstein distance when $\mcX$ is a compact metric space. 
In addition, it applies to maximum mean discrepancies as long as $\mcK(x, x)$ is uniformly bounded since,
for $h \in \mcB_{\mcK}$, 
\begin{equation}
	h(x) = \langle h, \mcK(x, \cdot) \rangle_{\mcK} \le \Vert h \Vert_{\mcK}  \Vert \mcK(x,\cdot )\Vert_{\mcK}  = \mcK(x,x)^{1/2}, %
\end{equation} 
so we may take $M = \sup_{x} \mcK(x,x)^{1/2}$.

\section{Practical Considerations}
\label{sec:guidance}

\subsection{Computation}
\label{sec:algorithm}

Our model selection framework can be used with any parameter estimation algorithm (e.g., expectation--maximization, Markov chain Monte Carlo, or variational inference).
Thus, the choice of algorithm can be based the statistical and computational considerations relevant to specific problem at hand. 
In our experiments we use an expectation--maximization algorithm to approximate the maximum likelihood parameter estimates for each 
possible choice of $\numcomps$.

Once parameter estimates have been obtained, the other major cost is computing the penalized structurally aware loss 
$\mathcal{R}^{\rho,\lambda}(\hat\allparam^{(\numcomps)}; z^{(\numcomps)}, \data{1:\numobs})$, which is generally dominated by 
the computation of the divergence estimator. %
We discuss the particular case of Kullback--Leibler divergence estimation in \cref{sec:kl-estimation}.
Numerous approaches exists for estimating other metrics such as the maximum mean discrepancy \citep{Gretton:2012,Krause:2023}.

\subsection{Choosing  \texorpdfstring{$\rho$}{rho}} \label{sec:choosing-rho}

While the results from \cref{sec:theory} show that, with an appropriate choice of $\rho$, our method consistently estimates the number of mixture model components, 
those results do not suggest a way to select $\rho$ in practice. 
Hence, we propose two complementary approaches to selecting $\rho$ that 
take advantage of the fact that the structurally aware loss is a piecewise linear function of $\rho$. 
Therefore, given a fitted model for each candidate $\numcomps$, we can easily compute the structurally aware loss for all values of $\rho$. 

Our first approach aims to leverage domain knowledge. %
Specifically, it is frequently the case that some related datasets are available with ``ground-truth'' labels either 
through manual labeling or via \emph{in silico} mixing of data where group labels are directly observed \citep[see, e.g.,][]{Souto:2008}.
In such cases an appropriate $\rho$ value or range of candidate values for one or more such datasets with ground-truth labels 
can be determined by maximizing a clustering accuracy metric such as $F$-measure. 
Because $\rho$ quantifies the divergence between the true component distributions and the model estimates, 
we expect the values found using this approach will generalize to new datasets that are reasonably similar. 
This approach is employed in our real-data experiments with flow cytometry data, discussed in Section \ref{sec:experiments}. 

For applications where there are no related datasets with ground-truth labels available, we 
propose a second, more heuristic approach.  %
After estimating the model parameters for each fixed $\numcomps=1,\ldots, \numcomps_{\max}$
and computing all component-wise divergences, we plot the penalized loss as a function of $\rho$ for each $\numcomps \in \{1,\dots,\numcomps_{\max}\}$.
The optimal model is determined by identifying the number of components which is best over a wide range of $\rho$ values,
with $\rho$ as small as possible. 
The idea behind this selection rule is to identify the first instance of stability, indicating that increasing $\rho$ further doesn't notably improve the loss. 
However, subsequent stable regions that appear afterward might introduce too much tolerance, potentially resulting in model underfitting.
This approach is similar in spirit to the one introduced for heuristically selecting the $\alpha$ parameter for the coarsened posterior \citep{Miller:2019}.

We illustrate our second approach using a Poisson mixture model simulation study.
Suppose data $x_1, \ldots, x_{\numobs}$ is generated from a negative binomial mixture $P_o = \sum_{k=1}^{\numcomps_o}, \pi_{ok}\distNBinom(m_k, p_k)$. %
The assumed model is $G_{\theta} = \sum_{k=1}^{\numcomps} \pi_k \distPoiss(\param_k)$.
We set $\numobs=20\,000, \numcomps_o = 3, m = (55, 75, 100), p = (0.5, 0.3, 0.5)$, and  $\pi_o = (0.3, 0.3, 0.4)$.
We use the plug-in Kullback--Leibler estimator (see \cref{eq:plug-in-KL} in \cref{sec:kl-estimation} below).
Based on \cref{fig:poismm}(left), the first wide and stable region corresponds to the true number of components $K = 3$. 
The observed data and fitted model distribution for $K = 3$ are shown in \cref{fig:poismm}(right).

\begin{figure}[tp]
	\centering	
	\subfloat{%
		\includegraphics[width=65mm,height=40mm]{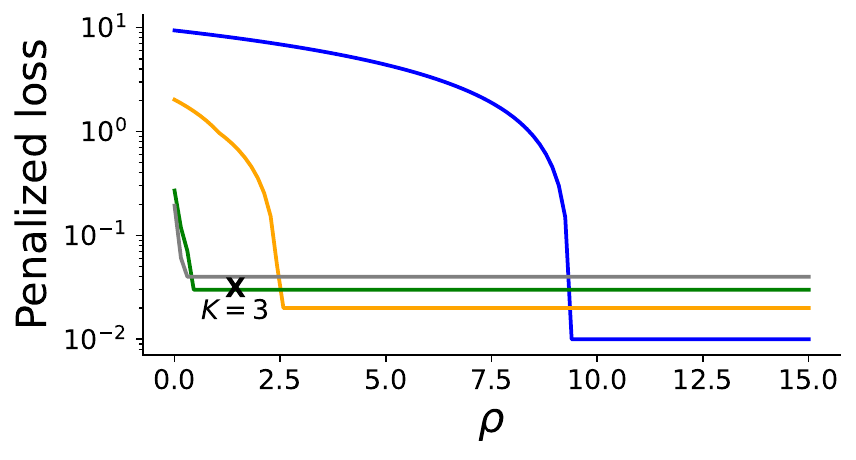}}
	\subfloat{%
		\includegraphics[width=65mm,height=40mm]{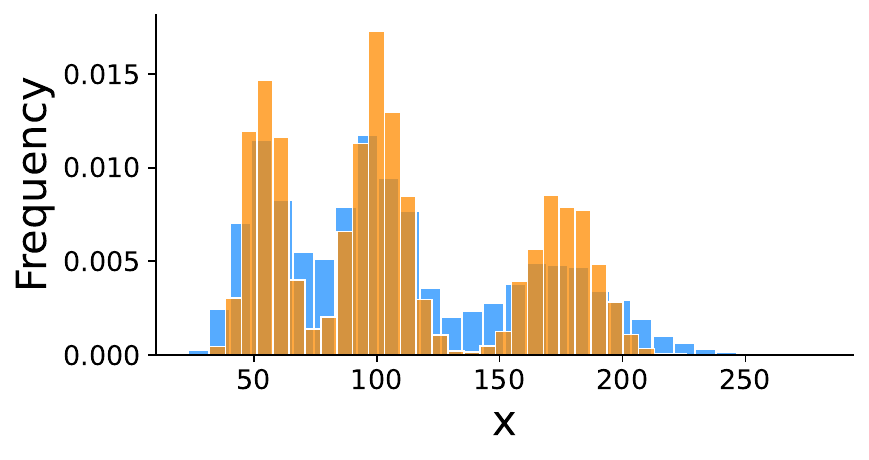}}
	\caption{Fitting a Poisson mixture model to data from a mixture of negative binomial distributions (\cref{sec:choosing-rho}). \textbf{Left:} Penalized loss plot for $\numcomps=1$ (blue), $\numcomps=2$ (orange), $\numcomps=3$ (green) and $\numcomps=4$ (gray). The cross mark indicates the first wide stable region and is labeled with the number of clusters our method selects. \textbf{Right:} Estimated model distribution (orange) compared to the observed data (blue).} 
	\label{fig:poismm}
\end{figure}

\subsection{Choosing \texorpdfstring{$\lambda$}{lambda}}

The penalized structurally aware loss function requires selecting the penalty parameter $\lambda$. 
In practice the primary purpose of the penalty is to select the smallest value of $\numcomps$ that results in $\mcR^{\rho}(\allparam^{(\numcomps)};z_{1:\numobs}, \data{1:\numobs}) = 0$.
So, we recommend setting $\lambda$ to a value that is small relative to the estimates of the divergence.
For example, we set $\lambda=0.01$ in our simulation studies and set $\lambda=10$ in the real-data experiments 
so as to ensure the $\rho$ versus loss plots were easy to read. In particular, choosing smaller values would not have changed the results.

\subsection{Estimation of Kullback--Leibler divergence}
\label{sec:kl-estimation}

Our theoretical results naturally require a consistent estimator of the divergence. %
Thus, we briefly cover the question of how best to estimate the Kullback--Leibler divergence in practice. 
For further details, refer to the Supplementary Materials.
We consider a general setup with observations $y_{1:\numobs} = (y_{1},\dots, y_{\numobs}) \in \mcX^{\otimes \numobs}$ independent, identically distributed 
from a distribution $P$.
First consider the case where $\mcX$ is countable, and let $\numobs(x) = \#\{ n \in \{1,\dots, \numobs\} \mid y_{n} = x\}$ denote the number of observations 
taking the value $x \in \mcX$. 
Letting $Q$ be a distribution with probability mass function $q$, we can use the plug-in estimator for $\kl{P}{Q}$,

\begin{equation}
	\klest{y_{1:\numobs}}{F} 
	= \sum_{x \in \mcX} \frac{\numobs(x)}{\numobs} \log \left\{\frac{\numobs(x)}{\numobs q(x)}\right\},
	\label{eq:plug-in-KL}
\end{equation}

which is consistent under modest regularity conditions \citep{Paninski:2003}.

Next we consider the case of general distributions on $\mcX = \mathbb{R}^{D}$,
when estimation of Kullback--Leibler divergence is less straightforward. 
One common approach is to utilize $k$-nearest-neighbor density estimation.
For $r > 0$, let $V_{D}(r) = \frac{\pi^{D/2}}{\Gamma(D/2+1)}r^{D}$ denote the volume of an 
$D$-dimensional ball of radius $r$ and let $r_{k,n}$ denote the distance to the $k$th nearest neighbor of $y_n$.
Following the same approach as \citet{Zhao:2020} and assuming the distribution $Q$ has Lebesgue density $q$,
we obtain a one-sample estimator for $\kl{P}{Q}$:
\begin{equation}
		\klestsub{b}{k}{y_{1:\numobs}}{Q}
		= \frac{1}{N}\sum_{n=1}^{\numobs} \log\left\{\frac{k/(\numobs-1)}{V_{D}(r_{k,n}) q(y_n)}\right\}.
		\label{eq:stare-knn-kl-est}
\end{equation}
As we discuss in the Supplementary Materials, for fixed $k$, the estimator in \cref{eq:stare-knn-kl-est} is asymptotically biased. 
However, it is easy to correct this bias, leading to the unbiased, consistent estimator 
\begin{equation}
	\klestsub{u}{k}{y_{1:\numobs}}{Q} = \klestsub{b}{k}{y_{1:\numobs}}{Q} - \log k + \psi(k),
\end{equation}
where $\psi(k)$ denotes the digamma function.
Another way to construct a consistent estimator is to let $k = k_{\numobs}$ depend on the data size $\numobs$,
with $k_{\numobs}  \to \infty$ as $\numobs \to \infty$. 
A canonical choice is $\klestsub{b}{k_{\numobs}}{y_{1:\numobs}}{Q}$ with $k_{\numobs}=\numobs^{1/2}$. 

We compare the three estimators for various dimensions in Supplementary Materials. 
Our results show that the bias-corrected estimator slightly improves the biased version when $\numobs \ge 5000$, 
while the adaptive estimator with $k_\numobs = \numobs^{1/2}$ has the most reliable performance  when $D=4$.
Since the data dimensions are relatively low in all our experiments, we use the adaptive estimator with $k_N=N^{1/2}$. %

It is important to highlight that Kullback--Leibler divergence estimators require density estimation, which in general requires the sample size to grow exponentially with the dimension \citep{Donoho:2000}.
This limits the use of such estimators with generic high dimensional data.
However, a general strategy to address this would be to take advantage of some known or inferable 
structure in the distribution to reduce the effective dimension of the problem.
We provide a more detailed illustration of this strategy in \cref{sec:high-dim-simulation} with simulated data that exhibits weak correlations across coordinates.

\section{Numerical Experiments}
\label{sec:experiments}
\subsection{Simulation Study: Skew-normal mixture}
\label{sec:simulation-gauss}

We now provide further details about the motivating example in \cref{sec:motivation}, and illustrate 
that our method selects the correct number of components under a variety of conditions on the level of misspecification
and the relative sizes of the mixture components.
We generate data $x_1, \ldots, x_n \in \mathbb{R}$ from skew-normal mixtures of the form $P_{o} = \sum_{k=1}^{\numcomps_{o}}\pi_{ok}\distSNorm(\mu_{ok}, \sigma_{ok},\gamma_{ok})$, where $\distSNorm(\cdot)$ denotes the skew-normal distribution and $\gamma_{ok}$ denotes the skewness parameter.
The density of the skew-normal distribution $\distSNorm(\mu, \sigma,\gamma)$ is $f(x) = 2\phi(x;\mu,\sigma)\Phi(\gamma x;\mu,\sigma)$, where $\phi(x;\mu,\sigma)$ and $\Phi(x;\mu,\sigma)$ denote the probability density function and the cumulative distribution function of $\distNorm(\mu,\sigma)$ respectively.
We model the data using  Gaussian mixture model $G_{\allparam} = \sum_{k=1}^{\numcomps}\pi_{k}\distNorm(\mu_{k}, \sigma_{k})$.
The bigger $|\gamma_{ok}|$, the larger the deviation from the Gaussian distribution,
hence introducing a higher degree of misspecification.

In \cref{sec:motivation}, we considered the case of two clusters of similar sizes. We set $\pi_{o} = (0.5, 0.5)$, $\mu_{o} = (-3,3)$, and $\sigma_{o} = (1,1)$ for the two scenarios in \cref{fig:motivate-comparison}: $\gamma_o = (-10,-1)$ (denoted \texttt{different}) and $\gamma_o = (-10,-10)$ (denoted \texttt{same}). 
We now compare our approach to the coarsened posterior with data from 
two-component mixtures of different cluster sizes. %
We set $\pi_{o} = (0.95, 0.05)$, $\mu_{o} = (-3,3)$, and $\sigma_{o} = (1,1)$ for the following three scenarios:
$\gamma_o = (-10,-1)$ (denoted \texttt{large-small}),  $\gamma_o = (-1,-10)$ (denoted \texttt{small-large}),
and  $\gamma_o = (-10,-10)$ (denoted \texttt{large-large}).

\begin{figure}[tp]
	\centering	
	\subfloat{\label{fig:gauss-cpos-pdfs-1}\includegraphics[width=50mm]{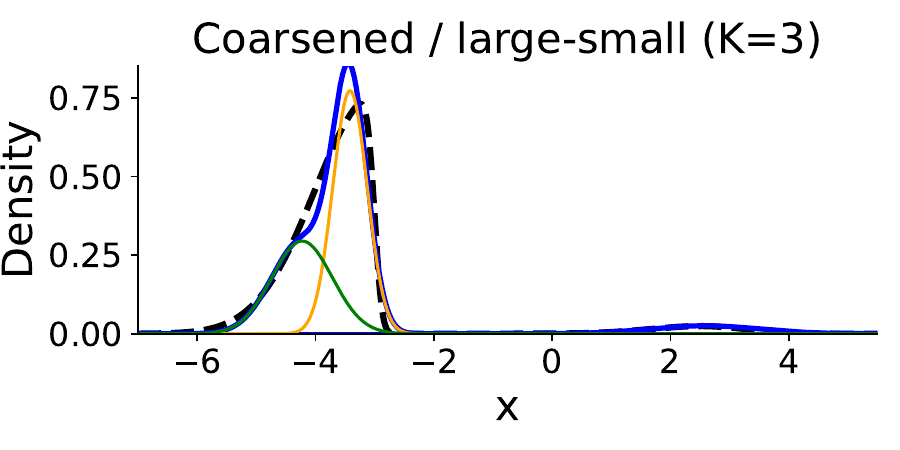}}
	\subfloat{\label{fig:gauss-cpos-pdfs-2}\includegraphics[width=50mm]{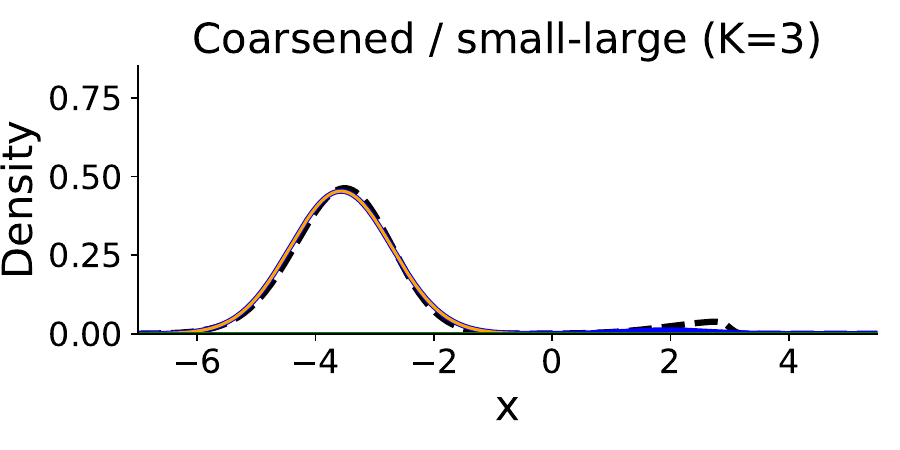}}
	\subfloat{\label{fig:gauss-cpos-pdfs-3}\includegraphics[width=50mm]{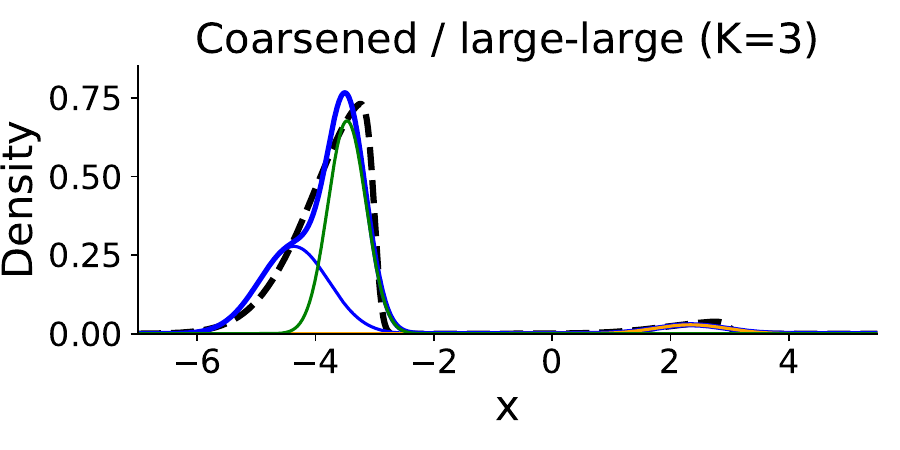}}
	\\
	\subfloat{\label{fig:gauss-stare-pdfs-1}\includegraphics[width=50mm]{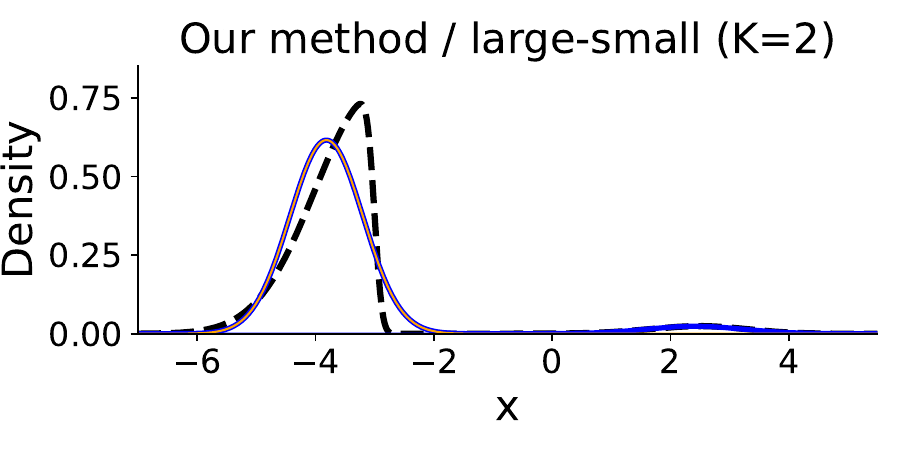}}
	\subfloat{\label{fig:gauss-stare-pdfs-2}\includegraphics[width=50mm]{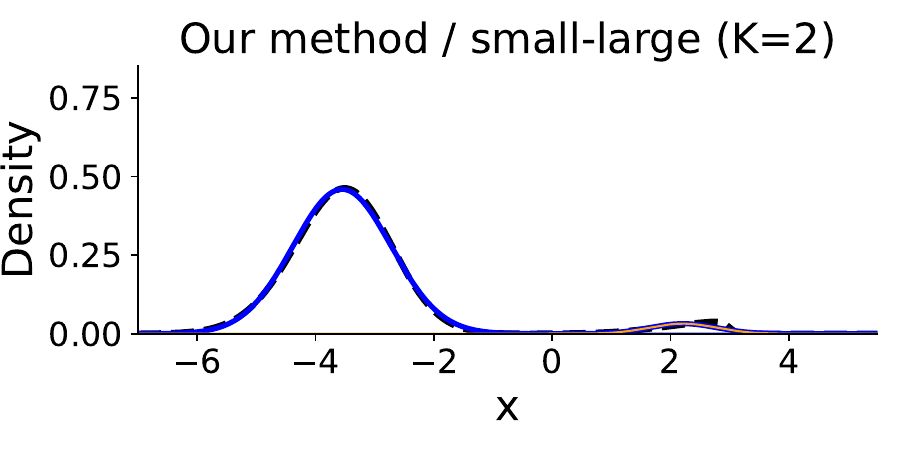}}
	\subfloat{\label{fig:gauss-stare-pdfs-3}\includegraphics[width=50mm]{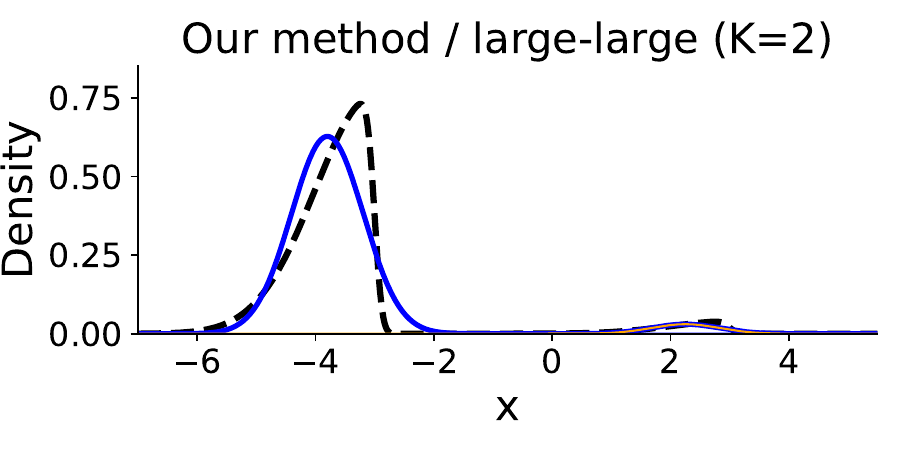}}
	\\
	\subfloat{\label{fig:gauss-stare-penloss-1}\includegraphics[width=50mm]{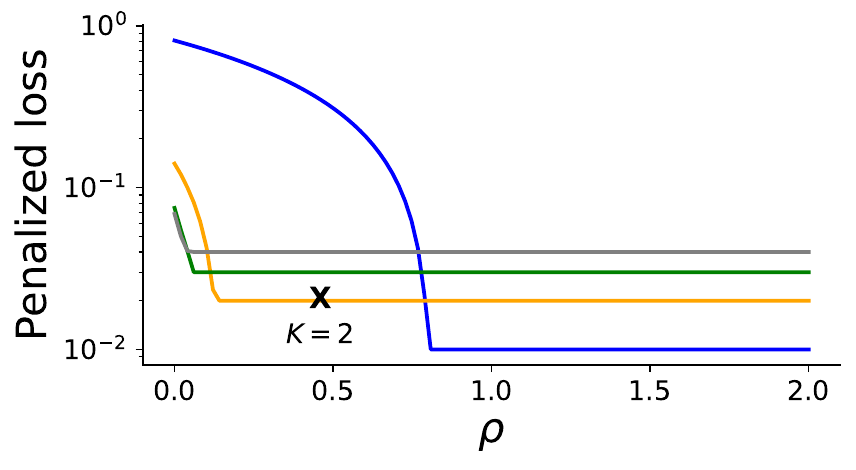}}
	\subfloat{\label{fig:gauss-stare-penloss-2}\includegraphics[width=50mm]{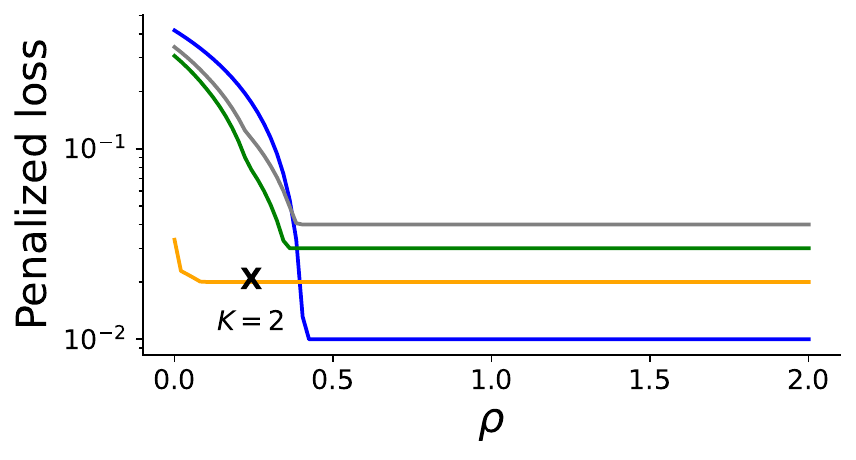}}
	\subfloat{\label{fig:gauss-stare-penloss-3}\includegraphics[width=50mm]{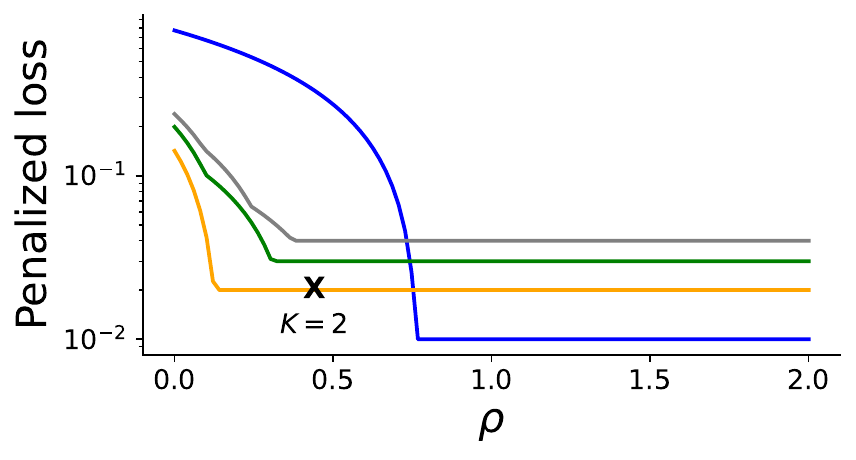}}

	\caption{
		Comparison between the coarsened posterior and our method when using a Gaussian mixture model to fit data generated from a mixture of 
		skew-normal distributions. 
		\textbf{First row:} Densities of the model and components selected using the coarsened posterior (solid lines) and the density of the data distribution (dashed line). The title specifies the data-generating distribution and the number of components selected. 
		In the middle plot of the first row, the minor cluster contains two components.
		\textbf{Second row:} Densities of the model and components selected using our structurally aware robust method. 
		\textbf{Third row:} Penalized loss plots, where the cross mark indicates the first wide stable region and is labeled with the number of clusters our method selects. 
		Line colors correspond to different $\numcomps$ values. See caption in \cref{fig:poismm} for details.} 
	\label{fig:gauss-simulation}
\end{figure}

For the coarsened posterior, we calibrate the hyperparameter $\alpha$ following the procedure from \citet[Section 4]{Miller:2019}.
First, we use Markov chain Monte Carlo to approximate the coarsened posterior for $\alpha$ values ranging from $10$ to $10^5$.
Then, we select the coarsened posterior with the $\alpha$ value at the clear cusp that indicates a good fit and low complexity. 
See the Supplementary Materials for further details and calibration plots. 
As shown for \texttt{large-small} and \texttt{large-large} in \cref{fig:gauss-simulation}, when the 
larger cluster has large misspecification, the coarsened posterior introduces one additional cluster to explain the larger cluster.
For the \texttt{small-large} case, when the larger cluster exhibits a small degree of misspecification, 
the coarsened posterior introduces one additional cluster to explain the smaller cluster.

Our method correctly calibrates the model mismatch cutoff $\rho$ using the penalized loss plots shown in \cref{fig:gauss-simulation},
as in all cases $\numcomps = 2$ corresponds to the first wide, stable region.
By the density plots in the middle column, we can see that our structurally aware robust model selection method 
is able to properly trade off a worse density estimate for better model selection.
Our approach also enjoys improved computational efficiency compared to coarsened posterior (see discussion in \cref{sec:motivation}). 
This is verified in the runtime comparison that running one scenario using our structurally aware robust model selection method (code in Python) takes about 1 minute 
while using the coarsened posterior (code in Julia) takes 140 minutes, despite Julia generally being much faster than Python in scientific computing applications \citep{Perkel:2019}.

\subsection{High Dimensional Study}
\label{sec:high-dim-simulation}

As discussed in Section \ref{sec:kl-estimation}, the Kullback--Leibler $k$-nearest-neighbor estimator becomes less accurate with increasing data dimension. 
While a general solution is unlikely to exist, we illustrate one approach to address this challenge.
Specifically, if we believe the coordinates are likely to be only weakly correlated, we can employ the $k$-nearest-neighbor method on each coordinate by assuming the coordinates are independent.
\begin{figure}[tp]
	\centering	
	\subfloat{\label{fig:high-dim-stare-loss}\includegraphics[width=65mm]{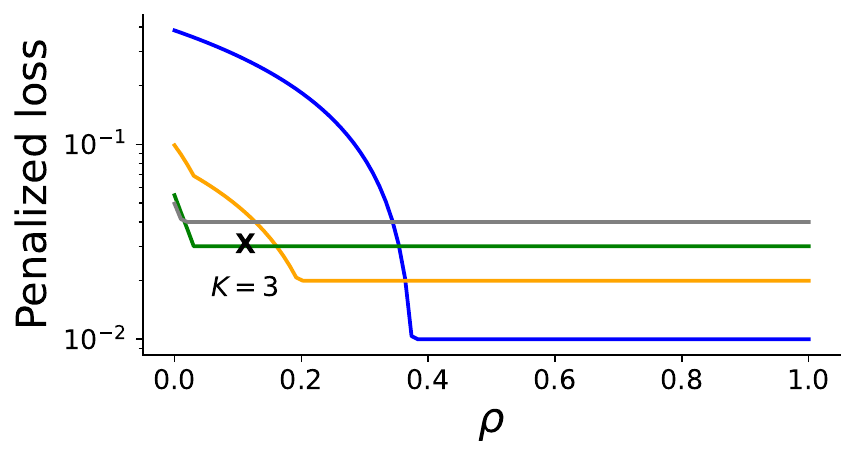}}
	\subfloat{\label{fig:high-dim-true}\includegraphics[width=65mm]{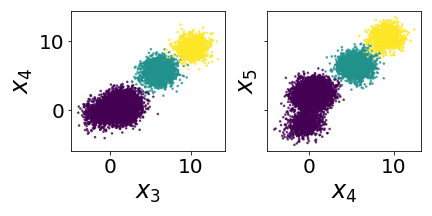}}
	
	\caption{Application of our method to simulated high dimensional data. \textbf{Left:} The penalized loss plot 
		for determining the number of componetnts. See \cref{fig:poismm} for description of line colors. 
		\textbf{Right:} Selected two-dimensional projections of ground truth.} 
	\label{fig:high-dim}
\end{figure}

We generalize the simulation example from \cref{sec:simulation-gauss} to generate data from multivariate skew normal mixtures
and fit a multivariate normal mixture model. 
Suppose $x_1, \ldots,x_{\numobs} \in \mathbb{R}^{D}$ is generated from $P_{o} = \sum_{k=1}^{\numcomps_{o}}\pi_{ok}\distSNorm(m_{ok}, \Sigma_{ok}, \gamma_{ok})$, where $\gamma_{ok}$ is a $D$-dimensional vector controlling the skewness of the distribution, $m_{ok}$ and $\Sigma_{ok}$ are the means and covariance matrix of the $k$th component. 
The density for multivariate skew normal distribution is $f(x; m, \Sigma, \gamma) = 2\phi(x; m, \Sigma)\Phi(\gamma \odot x; m, \Sigma)$, where $\phi(x; m, \Sigma)$ and $\Phi(x; m, \Sigma)$ are the probability density function and cumulative distribution function of multivariate normal $\distNorm(m, \Sigma)$ respectively.
For each covariance matrix, we introduce weak correlations by letting $\Sigma_{ok} = \Sigma$,
where $\Sigma_{ij}=\exp\{-(i-j)^2/\sigma^2\}$ and $\sigma$ controls the strength of the correlation. 
We set $\sigma=0.6$ and sample  $N=10\,000$ observations of dimension $D=50$.
As shown in \cref{fig:high-dim}(left), the wide and stable region corresponds to the true number of components $K = 3$.
\cref{fig:high-dim}(right) illustrates the value of model-based clustering, particularly in high dimensions: 2-dimensional 
projections of the data give the appearance of there being four clusters in total, when in fact there are only three.

\subsection{Application to Flow Cytometry Data}
\label{sec:flow-cytometry}

Flow cytometry is a technique used to analyze properties of individual cells in a biological material.
Typically, flow cytometry data measures 3--20 properties of tens of thousands cells. 
Cells from distinct populations tend to fall into clusters and discovering cell populations by identifying clusters 
is of primary interest in practice.
Usually, scientists identify clusters manually which is labor-intensive and subjective. 
Therefore, automated and reliable clustering inference is invaluable. 
We follow the approach of \citet{Miller:2019} and test our method on the same 12 flow cytometry datasets originally from a longitudinal study of graft-versus-host disease %
in patients undergoing blood or marrow transplantation \citep{Brinkman:2007}. 
For these datasets, we calibrate $\rho$ using the first 6 datasets. %
We take the manual clustering as the ground truth.
The datasets consist of $D=4$ dimensions and varying number of observations for each dataset.
As shown in \cref{fig:flowcyt-train}, all training datasets 1-6 have a nearly identical trend of
clustering accuracy as a function of $\rho$. 
The averaged F-measure achieves the maximum when $\rho \approx 1.16$, which is a point of maximum F-measure for all 6 datasets. 
The consistency of our method compares favorably to using coarsening, where drastically different $\alpha$ values maximize the F-measure \citep[Figure 5]{Miller:2019}.
These results provide evidence that our approach is taking better advantage of the common structure and degree of misspecification across datasets.

\begin{figure}[tp]
	\centering	
	\includegraphics[width=80mm]{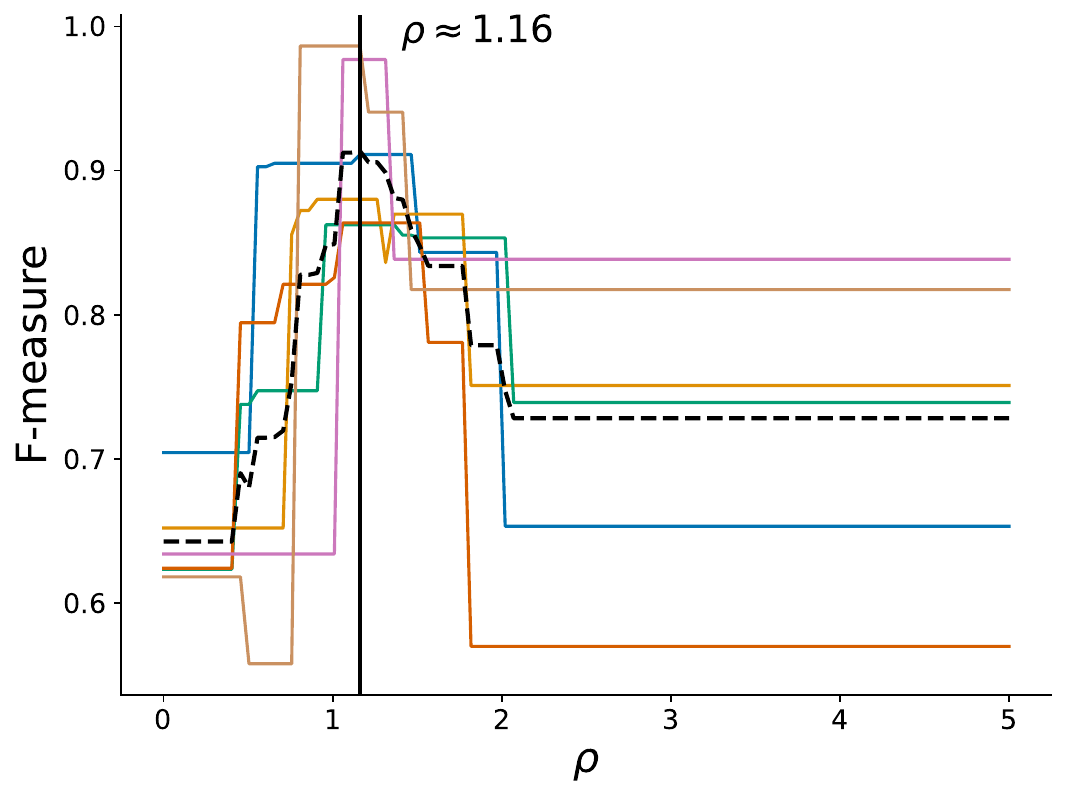}
	\caption{$\rho$ versus F-measure for training datasets 1--6 (solid lines). The black dashed line indicates averaged F-measure over the training datasets.} 
	\label{fig:flowcyt-train}
\end{figure}

For test datasets 7-12, we propose to pick the value of $\numcomps$ that achieves a stable structurally aware loss and uses a value of $\rho$ that is as close
as possible to the estimated $\rho$ values of $1.16$. 
As shown in \cref{tab:flowcyto-test}, our method provides essentially the same average accuracy as the coarsened posterior
while being substantially more computationally efficient, despite using a much slower programming language (2 hours using Python versus 30 hours using Julia). %

\begin{table}[h]
	\centering
	\caption{F-measures on flow cytometry test datasets 7-12}
	\renewcommand{\arraystretch}{1.2}
	\begin{tabular}{ccccccc}
		\hline
		& 7             & 8             & 9             & 10            & 11            & 12            \\ \hline
		Structurally aware                                                        & 0.63          & \textbf{0.92} & \textbf{0.94} & \textbf{0.99} & \textbf{0.99} & \textbf{0.98} \\ \hline
		\begin{tabular}[c]{@{}c@{}}Coarsened \end{tabular} & \textbf{0.67} & 0.88          & \textbf{0.93} & \textbf{0.99} & \textbf{0.99} & \textbf{0.99} \\ \hline
	\end{tabular}
	\label{tab:flowcyto-test}
\end{table}

\section*{Acknowledgement}

J.~Li and J.~H.~Huggins were partially supported by the National Institute of General Medical Sciences of the National Institutes of Health
as part of the Joint NSF/NIGMS Mathematical Biology Program.
The content is solely the responsibility of the authors and does not necessarily represent the official views of the National Institutes of Health.

\bibliographystyle{imsart-nameyear}
\bibliography{paper-ref}

\newpage
\appendix

\section{Technical Lemma}

The following lemma states that integral probability metrics (as defined in Eq.\ (8)) are jointly convex -- that is, they satisfy Assumption 1(b). 
\begin{lemma}\label{lem:ipm-joint-convex}
	Suppose $P_i$ and $Q_i$, $i = 1,\ldots, n$ are probability measures defined on $\mcX$. Then for $0\leq w_i \leq 1$ and $\sum_{i = 1}^n w_i = 1$, 
	\[
	d_{\mcH}\left(\sum_{i=1}^{n}w_i P_i, \sum_{i=1}^{n}w_i Q_i\right) 
	\le \sum_{i=1}^{n}w_i  	d_{\mcH} (P_i, Q_i).
	\]
	
\end{lemma}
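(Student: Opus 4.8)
The plan is to prove the inequality directly from the definition of the integral probability metric in Eq.~(8), using only linearity of integration, the triangle inequality, and monotonicity of the supremum. The key structural fact is that, for a fixed test function $h$, the functional $\mu \mapsto \int h\,\dee\mu$ is linear, so integrating $h$ against a mixture $\sum_i w_i P_i$ yields $\sum_i w_i \int h\,\dee P_i$. Consequently, the difference of integrals against the two mixtures is exactly the convex combination of the per-index differences.

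First I would fix an arbitrary $h \in \mcH$ and use this linearity to write
\[
\int h\,\dee\Bigl(\textstyle\sum_{i=1}^{n} w_i P_i\Bigr) - \int h\,\dee\Bigl(\textstyle\sum_{i=1}^{n} w_i Q_i\Bigr) = \sum_{i=1}^{n} w_i\Bigl(\int h\,\dee P_i - \int h\,\dee Q_i\Bigr).
\]
Second, I would apply the triangle inequality, using $w_i \ge 0$, to move the absolute value inside the sum, and then bound each term $\bigl|\int h\,\dee P_i - \int h\,\dee Q_i\bigr|$ by $d_{\mcH}(P_i, Q_i)$, which is valid because $h \in \mcH$. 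This produces the upper bound $\sum_{i=1}^{n} w_i\, d_{\mcH}(P_i, Q_i)$, which does not depend on $h$.

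The final step is to take the supremum over $h \in \mcH$ on the left-hand side; since the right-hand side is independent of $h$, the inequality is preserved and the claim follows. No step presents a genuine obstacle, as this is essentially a one-line argument; the only point meriting care is the last one, where one uses the elementary fact that the supremum of a family of numbers each bounded above by a fixed constant is itself bounded by that constant. Specializing to $n = 2$ with $w_1 = w$ and $w_2 = 1 - w$ recovers Assumption~1(b) in precisely the form invoked in \cref{coro:KL,coro:MMD}.
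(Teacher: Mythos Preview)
Your proposal is correct and follows essentially the same approach as the paper's proof: both exploit linearity of integration to write the difference as $\sum_i w_i(\int h\,\dee P_i - \int h\,\dee Q_i)$, apply the triangle inequality using $w_i \ge 0$, and then bound each summand by the supremum over $\mcH$. The only cosmetic difference is that the paper writes the chain as $\sup|\sum| \le \sup\sum|\cdot| \le \sum\sup|\cdot|$, whereas you fix $h$ first, bound by $\sum_i w_i\,d_{\mcH}(P_i,Q_i)$, and then take the supremum; these are equivalent orderings of the same elementary steps.
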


\begin{proof}
	By definition of the integral probability metric, 
	\begin{align}
		{d_{\mcH}\left(\sum_{i=1}^{n}w_i P_i, \sum_{i=1}^{n}w_iQ_i\right)}
		&=  \sup_{h \in \mcH}\left\vert \int_{\mcX} h(x) \left(\sum_{i=1}^{n}w_i P_i(\dee x)\right) - \int_{\mcX} h(x)\left(\sum_{i=1}^{n}w_i Q_i(\dee x)\right)\right\vert  \nonumber \\
		& = \sup_{h \in \mcH}\left\vert \sum_{i=1}^{n}w_i \left(\int_{\mcX} h(x)  P_i(\dee x) - \int_{\mcX} h(x) Q_i(\dee x) \right)\right\vert\nonumber \\
		& \le  \sup_{h \in \mcH} \sum_{i=1}^{n}w_i \left\vert  \int_{\mcX} h(x) P_i(\dee x) - \int_{\mcX} h(x) Q_i(\dee x) \right\vert \nonumber \\
		& \le \sum_{i=1}^{n}w_i  \sup_{h \in \mcH} \left\vert  \int_{\mcX} h(x) P_i(\dee x) - \int_{\mcX} h(x) Q_i(\dee x) \right\vert  \nonumber \\
		& = \sum_{i=1}^{n}w_i  d_{\mcH} (P_i, Q_i).
	\end{align}
\end{proof}

\section{Proofs}
\subsection{Notation}
We write $\op(g(\numobs))$ to denote a random function $f$ that satisfies $f(\numobs) / g(\numobs) \to 0$ in probability for $\numobs \to \infty$.
Let $\hat\param_{k}^{(\numcomps,\numobs)}$ denote the $k$th component parameter estimate using $\data{1:\numobs}$ 
for the mixture model with $\numcomps$ components. 
More generally, we replace superscript $(\numcomps)$ with $(\numcomps,\numobs)$ to make the dependence on $\numobs$ explicit. 
Let $\hat{n}^{(\numcomps,\numobs)}_k = |X^{(k)}(z^{(\numcomps,\numobs)})|$ and $\hat{\pi}^{(\numcomps,\numobs)}_{k} = \hat{n}^{(\numcomps,\numobs)}_k/\numobs$. 
Note that, with probability 1, $\hat{n}^{(\numcomps,\numobs)}_k \rightarrow \infty$ as $\numobs \rightarrow \infty$.
For simplicity, we introduce the shorthand notation 
$F^{(\numcomps,\numobs)}_{k} = F_{\hat\param^{(\numcomps,\numobs)}_k}$, 
$\mcR^{\rho}_{\numobs} (\allparam^{(\numcomps)}) = \mcR^{\rho}(\allparam^{(\numcomps)}; z^{(\numcomps,\numobs)}, \data{1:\numobs})$, and 
$\mcR^{\rho,\lambda}_{\numobs}(\allparam^{(\numcomps)}) = \mathcal{R}^{\rho,\lambda}(\allparam^{(\numcomps)}; z^{(\numcomps,\numobs)}, \data{1:\numobs})$. 

Define the conditional component probabilities based on optimal model distribution and true generating distribution respectively as
$$
p_{\star}^{(\numcomps)}(k \mid x)= \frac{\pi^{(\numcomps)}_{\star k}f_{\star k}^{(\numcomps)}(x)}{g^{(\numcomps)}_{\star}(x)}, \qquad
p_{o}(k \mid x) = \frac{\pi_{ok}p_{ok}(x)}{p_{o}(x)}.
$$
For conditional probabilities of model distribution, we denote as $\hat{p}^{(\numcomps, \numobs)}(k \mid x) = \hat{\pi}^{(\numcomps,\numobs)}_{k}f_{k}^{(\numcomps,\numobs)}(x) / g^{(\numcomps,\numobs)}(x)$.
Note that $\hat{p}^{(\numcomps,\numobs)} (k\mid x) \rightarrow p_{\star}^{(\numcomps)}(k \mid x)$ as $\numobs \rightarrow \infty$ with probability 1.

\subsection{Proof of Theorem 1} \label{pf:main-thm}

We show that (1) for $\numcomps = \numcomps_o$,  $ \mcR^{\rho}_{\numobs}(\allparam_{\star}^{(\numcomps)}) \rightarrow 0$ in probability as $\numobs \rightarrow \infty$, and (2) for $\numcomps < \numcomps_o$, $\mcR^{\rho}_{\numobs}(\allparam_{\star}^{(\numcomps)})\rightarrow \infty$ in probability as $\numobs \rightarrow \infty$.
The theorem conclusion follows immediately from these two results since the unpenalized loss is lower bounded by zero,
so the penalized loss will be asymptotically minimized at the smallest $\numcomps$ which has unpenalized loss of zero.

\textbf{Proof of part (1).} %
Fix $\numcomps=\numcomps_{o}$. 
It follows from Assumptions 1(c,d) and 2 that $\discrest{\hat{P}^{(\numcomps_{o},\numobs)}_{k}}{F^{(\numcomps,\numobs)}_{k} }  \to \discr{\tilde{P}_{k}}{F^{(\numcomps_{o})}_{\opt k}}$ in probability.
Hence, it follows from Assumption 3($\rho$)(b) that there exists $\epsilon>0$ such that 
\[
\discrest{\hat{P}^{(\numcomps_{o},\numobs)}_{k}}{F^{(\numcomps,\numobs)}_{k} } < \rho - \epsilon + \op(1).
\]
Using this inequality, it follows that  
\begin{align}
	\mcR^{\rho}_{\numobs}(\allparam_{\opt}^{(\numcomps_o)})  
	& =\sum_{k=1}^{\numcomps_o}	\hat{n}^{(\numcomps_{o},\numobs)}_k \max\{0, \discrest{\hat{P}_{k}^{(\numcomps_o, \numobs)}}{F^{(\numcomps_{o},\numobs)}_{k}}  - \rho\}\\
	& \le \sum_{k=1}^{\numcomps_o}	\hat{n}^{(\numcomps_{o},\numobs)}_k\max\{0, -\epsilon+\op(1)\}.
\end{align}
Hence, we can conclude that $\lim_{\numobs \rightarrow\infty}\Pr[ \mcR^{\rho}_{\numobs}(\allparam_{\opt}^{(\numcomps_o)}) =0] = 1$. %

\textbf{Proof of part (2).} 
Now we consider the case of $\numcomps < \numcomps_o$. 
Note the empirical distribution can be written as $\hat{P}^{(\numcomps, \numobs)} = \sum_{k=1}^{\numcomps}\hat{\pi}_k^{(\numcomps, \numobs)} \hat{P}_k^{(\numcomps, \numobs)}$.
By dominated convergence, we know that for $\numobs \to \infty$,
\begin{equation}
	\hat{\pi}_k^{(\numcomps,\numobs)} = \int \hat{p}^{(\numcomps,\numobs)} (k\mid y)P_o(\dee y) \rightarrow  \int p_{\star}^{(K)}(k\mid y)P_o(\dee y) = \pi^{(\numcomps)}_{\star k}, \label{eq:pi-conv}
\end{equation}
where convergence is in probability.

For the purpose of contradiction,
assume that for all $k = 1,\ldots, \numcomps$, $\discr{\tilde{P}_{k}}{F^{(\numcomps)}_{\opt k}} \le \rho$. 
Consider 
\begin{align}
	&d\left(G^{(\numcomps)}_{\opt}, P_{o}\right) \\
	& \le d\left(G^{(\numcomps)}_{\opt}, \hat{P}^{(\numcomps, \numobs)}\right) + d\left( \hat{P}^{(\numcomps, \numobs)}, P_{o}\right)\label{eq:trigin-ineq}\\
	& = d\left(\sum_{k = 1}^{\numcomps}  \pi^{(\numcomps)}_{\star k}F^{(\numcomps)}_{\opt k}, \sum_{k=1}^{\numcomps}\hat{\pi}_k^{(\numcomps, \numobs)} \hat{P}_k^{(\numcomps, \numobs)}\right) + \op(1) \label{eq:assump1a}\\
	&  \le d\left(\sum_{k = 1}^{\numcomps}  \pi^{(\numcomps)}_{\star k}F^{(\numcomps)}_{\opt k}, \sum_{k = 1}^{\numcomps}  \hat{\pi}_k^{(\numcomps, \numobs)} F^{(\numcomps)}_{\opt k} \right) + d\left(\sum_{k = 1}^{\numcomps}  \hat{\pi}_k^{(\numcomps, \numobs)} F^{(\numcomps)}_{\opt k}, \sum_{k=1}^{\numcomps}\hat{\pi}_k^{(\numcomps, \numobs)} \hat{P}_k^{(\numcomps, \numobs)}\right) + \op(1), \label{eq:decomp}
\end{align}
where \cref{eq:assump1a} follows by Assumption 1(a).
Define $\pi_k^{\min} = \min\{\pi^{(\numcomps)}_{\star k}, \hat{\pi}^{(\numcomps, \numobs)}_k\}$ and $\bar{\pi}=1-\sum_{k=1}^{\numcomps}\pi_k^{\min}$. 
Let $\Vert \cdot \Vert_{1}$ denote the $\ell_1$-norm.
For the first term in \cref{eq:decomp}, we can write 
\begin{align}
	&d\left(\sum_{k = 1}^{\numcomps} \pi^{(\numcomps)}_{\star k}F^{(\numcomps)}_{\opt k}, \sum_{k = 1}^{\numcomps}  \hat{\pi}_k^{(\numcomps, \numobs)} F^{(\numcomps)}_{\opt k} \right) \\
	&= d\left(\sum_{k = 1}^{\numcomps} \pi_k^{\min}F^{(\numcomps)}_{\opt k} + \bar{\pi}\sum_{k = 1}^{\numcomps}  \frac{\pi^{(\numcomps)}_{\star k}-\pi_k^{\min}}{\bar{\pi}}F^{(\numcomps)}_{\opt k}, \sum_{k = 1}^{\numcomps}  \pi_k^{\min}F^{(\numcomps)}_{\opt k}+ \bar{\pi}\sum_{k = 1}^{\numcomps}  \frac{\hat{\pi}_k^{(\numcomps, \numobs)} -  \pi_k^{\min}}{\bar{\pi}}F^{(\numcomps)}_{\opt k} \right) \\
	&\le \sum_{k = 1}^{\numcomps} \pi_k^{\min}d\left(F^{(\numcomps)}_{\opt k}, F^{(\numcomps)}_{\opt k}\right) + \bar{\pi}d\left(\sum_{k = 1}^{\numcomps}  \frac{\pi^{(\numcomps)}_{\star k}-\pi_k^{\min}}{\bar{\pi}}F^{(\numcomps)}_{\opt k}, \sum_{k = 1}^{\numcomps}  \frac{\hat{\pi}_k^{(\numcomps, \numobs)} -  \pi_k^{\min}}{\bar{\pi}}F^{(\numcomps)}_{\opt k}\right)\label{eq:joint-convexity}\\
	&\le \Vert \pi^{(\numcomps)}_{\star} - \hat{\pi}^{(\numcomps,\numobs)} \Vert_{1} d\left(\sum_{k = 1}^{\numcomps}  \frac{\pi^{(\numcomps)}_{\star k}-\pi_k^{\min}}{\bar{\pi}}F^{(\numcomps)}_{\opt k}, \sum_{k = 1}^{\numcomps}  \frac{\hat{\pi}_k^{(\numcomps, \numobs)} -  \pi_k^{\min}}{\bar{\pi}}F^{(\numcomps)}_{\opt k}\right)\label{eq:pi-l1norm-bound-pibar}\\
	&= \Vert \pi^{(\numcomps)}_{\star} - \hat{\pi}^{(\numcomps,\numobs)}\Vert_{1} \\
	& \qquad \times  d\left( \sum_{k = 1}^{\numcomps} \sum_{l = 1}^{\numcomps} \frac{(\pi^{(\numcomps)}_{\star k}-\pi_k^{\min})(\hat{\pi}_l^{(\numcomps, \numobs)} -  \pi_l^{\min}) }{\bar{\pi}^2}  F^{(\numcomps)}_{\opt k}, \sum_{k = 1}^{\numcomps} \sum_{l = 1}^{\numcomps} \frac{(\pi^{(\numcomps)}_{\star k}-\pi_k^{\min})(\hat{\pi}_l^{(\numcomps, \numobs)} -  \pi_l^{\min}) }{\bar{\pi}^2} F^{(\numcomps)}_{\opt l}\right)\\ 
	&\le \Vert \pi^{(\numcomps)}_{\star} - \hat{\pi}^{(\numcomps,\numobs)}\Vert_{1} \sum_{k = 1}^{\numcomps} \sum_{l = 1}^{\numcomps} \frac{(\pi^{(\numcomps)}_{\star k}-\pi_k^{\min})(\hat{\pi}_l^{(\numcomps, \numobs)} -  \pi_l^{\min}) }{\bar{\pi}^2} d\left(F^{(\numcomps)}_{\opt k}, F^{(\numcomps)}_{\opt l}\right) \\
	&\le \Vert \pi^{(\numcomps)}_{\star} - \hat{\pi}^{(\numcomps,\numobs)}\Vert_{1} \max_{\substack{k,l\in \{1,\ldots,K\} \\ k\neq l}} d\left(F^{(\numcomps)}_{\opt k}, F^{(\numcomps)}_{\opt l}\right) \\
	& = \op(1), \label{eq:first-bound}
\end{align}
where \cref{eq:joint-convexity} uses Assumption 1(b), \cref{eq:pi-l1norm-bound-pibar} follows by the fact that $\bar{\pi}=1- \sum_{k=1}^{\numcomps}\pi_k^{\min} \le  \sum_{k=1}^{\numcomps}\pi_k^{\max}-\sum_{k=1}^{\numcomps}\pi_k^{\min}=\Vert \pi^{(\numcomps)}_{\star} - \hat{\pi}^{(\numcomps,\numobs)} \Vert_{1}$, and \cref{eq:first-bound} follows by Assumption 1(f) and \cref{eq:pi-conv}.

For the second term in \cref{eq:decomp}, we can upper bounded as
\begin{align}
	d\left(\sum_{k = 1}^{\numcomps}  \hat{\pi}_k^{(\numcomps, \numobs)} F^{(\numcomps)}_{\opt k}, \sum_{k=1}^{\numcomps}\hat{\pi}_k^{(\numcomps, \numobs)} \hat{P}_k^{(\numcomps, \numobs)}\right) 
	&\le   \sum_{k = 1}^{\numcomps}  \hat{\pi}_k^{(\numcomps, \numobs)}d\left( F^{(\numcomps)}_{\opt k},  \hat{P}_k^{(\numcomps, \numobs)}\right)  \label{eq:assump1c}\\
	&\le   \sum_{k = 1}^{\numcomps}  \hat{\pi}_k^{(\numcomps, \numobs)}\phi\left(\discr{\hat{P}_k^{(\numcomps, \numobs)}}{F^{(\numcomps)}_{\opt k}}\right)  \label{eq:assump1d}\\
	& \le  \sum_{k = 1}^{\numcomps}  \hat{\pi}_k^{(\numcomps, \numobs)}\phi\left(\rho\right) \label{eq:prop}\\
	&= \phi(\rho)\label{eq:second-bound},  
\end{align}
where %
\cref{eq:assump1c} follows by Assumption 1(b), 
\cref{eq:assump1d} follows by Assumption 1(e), and 
\cref{eq:prop} follows by our assumption for purposes of contradiction.

Plugging \cref{eq:first-bound,eq:second-bound} into \cref{eq:decomp} yields the final inequality $d\left(G^{(\numcomps)}_{\opt}, P_{o}\right) \le \phi(\rho) + \op(1)$, which contradicts Assumption 3($\rho$)(c).
Therefore, there must exist $\tilde{k}$ such that $\discr{\tilde{P}_{\tilde{k}}}{F^{(\numcomps)}_{\opt\tilde{k}}} > \rho$.
Hence, for some $\epsilon > 0$, $\discr{\hat{P}_{\tilde{k}}^{(\numcomps,\numobs)} }{F^{(\numcomps,\numobs)}_{\tilde{k}}} = \rho + \epsilon +\op(1)$.
Hence, we have
\begin{align}
	\mcR^{\rho}_{\numobs}(\allparam_{\opt}^{(\numcomps)}) 
	&\ge \hat{n}_{\tilde{k}}^{(\numcomps,\numobs)}\max\{0, \discrest{\hat{P}_{\tilde{k}}^{(\numcomps,\numobs)} }{F^{(\numcomps,\numobs)}_{\tilde{k}}} - \rho\} \\
	&= \hat{n}_{\tilde{k}}^{(\numcomps,\numobs)}  \max\{0, \discr{\hat{P}_{\tilde{k}}^{(\numcomps,\numobs)} }{F^{(\numcomps,\numobs)}_{\tilde{k}}} - \rho +\op(1)\}  \label{ineq:assump1c}\\
	&= \hat{n}_{\tilde{k}}^{(\numcomps,\numobs)}\max\{0, \epsilon + \op(1)\} \\
	&\rightarrow \infty,\label{ineq:N-infty}
\end{align} 
where \cref{ineq:assump1c} follows from Assumption 1(c) and \cref{ineq:N-infty} follows
since $\hat{n}_{\tilde{k}}^{(\numcomps,\numobs)}  \rightarrow \infty$ in probability for $\numobs\rightarrow\infty$.

\subsection{Proof of Proposition 1}

Assumption 1(a) follows by \citet[Theorem 1]{Wellner:1981}.
Assumption 1(b) holds by \cref{lem:ipm-joint-convex}.
Assumption 1(e) holds with $\phi(\rho) =(\rho/2)^{1/2}$ by the fact that, letting $d_{\mathrm{TV}}$ 
denote total variation distance, $d_{\mathrm{BL}} \le d_{\mathrm{TV}}$ and 
$(2d_{\mathrm{TV}})^{2}  \le \mathrm{KL}$ \citep[][Section 3]{Gibbs:2002}.
Assumption 1(f) holds since $d_{\mathrm{BL}} \le 1 < \infty$.

\subsection{Proof of Proposition 2}
Assumption 1(a) follows by the assumption that $d_{\mathrm{MMD}}$ metrizes weak convergence.
Assumption 1(b) holds by \cref{lem:ipm-joint-convex}.
Assumption 1(e) holds by choosing $\phi(\rho) =\rho$ for maximum mean discrepancy. Assumption 1(f) holds for maximum mean discrepancy with bounded kernels since $d_{\mathrm{MMD}} \le \sup_{x} \mcK(x,x)^{1/2} < \infty$.

\subsection{Proof of Proposition 3}

Let $p_{ok}$ be the density for the true $k$th component distribution. Let $g^{(K)}_{\star}(x) =\sum_{\ell=1}^{\numcomps}\pi^{(\numcomps)}_{\star\ell}f^{(\numcomps)}_{\star\ell}(x)$. 
It follows by the definition of $\Pr_{\opt}(z=k\mid x)$ and Assumption 3($\rho$)(a) that
\begin{align}
	\tilde{P}_{k}(\dee x) &= \frac{p_{\star}^{(\numcomps_{o})}(k \mid x)P_o(\dee x) }{\int p_{\star}^{(\numcomps_{o})}(k \mid y)P_o(\dee y)} \label{appx-eq:phat-def}  = f^{(\numcomps_{o})}_{\opt k}(x) \frac{\sum_{\ell=1}^{\numcomps_{o}}\pi_{ok}P_{o\ell}(\dee x)}{\sum_{\ell=1}^{\numcomps_{o}}\pi^{(\numcomps_{o})}_{\opt\ell} f^{(\numcomps_{o})}_{\opt\ell}(x)}.
\end{align}
We can rewrite $\tilde{P}_{k}(\dee x) $ in terms of the rations $p_{\star}^{(\numcomps_{o})}(k \mid x)/p_{o}(k \mid x)$ and $\pi_{ok}/\pi^{(\numcomps_{o})}_{\star k}$:
\begin{align}
	\tilde{P}_{k}(\dee x)  &=  \frac{\pi_{ok}}{\pi^{(\numcomps_{o})}_{\star k}}\cdot  \frac{\pi^{(\numcomps_{o})}_{\star k}f_{\star k}^{(\numcomps_{o})}(x)}{\pi_{ok}p_{ok}(x)}\cdot \frac{\sum_{\ell=1}^{\numcomps_{o}}\pi_{ok}p_{o\ell}(x)}{\sum_{\ell=1}^{\numcomps_{o}}\pi^{(\numcomps_{o})}_{\star \ell}f_{\star \ell}^{(\numcomps_{o})}(x)}\cdot P_{ok}(\dee x).\\
	&= \frac{\pi_{ok}}{\pi^{(\numcomps_{o})}_{\star k}} \cdot \frac{\pi^{(\numcomps_{o})}_{\star k}f_{\star k}^{(\numcomps_{o})}(x)}{\sum_{\ell=1}^{\numcomps_{o}}\pi^{(\numcomps_{o})}_{\star \ell}f_{\star \ell}^{(\numcomps_{o})}(x)} \cdot \frac{\sum_{\ell=1}^{\numcomps_{o}}\pi_{ok}p_{o\ell}(x)}{\pi_{ok}p_{ok}(x)}\cdot P_{ok}(\dee x)\\
	& = \frac{\pi_{ok}}{\pi^{(\numcomps_{o})}_{\star k}}\cdot \frac{p_{\star}^{(\numcomps_{o})}(k \mid x)}{p_{o}(k \mid x)}\cdot P_{ok}(\dee x) \label{eq:rewrite-phat-k}
\end{align}
To bound $\kl{\tilde{P}_{k}}{F^{(\numcomps_{o})}_{\opt k}}$, we plug in the expression for $	\tilde{P}_{k}(\dee x)$ in \cref{eq:rewrite-phat-k} and get
\begin{align}
	\kl{\tilde{P}_{k}}{F^{(\numcomps_{o})}_{\opt k} } &= \int \tilde{P}_{k}(\dee x)\log\frac{d \tilde{P}_{k}}{d F_{\phi^{\opt}_{k}}}(x) \\
	& = \int \frac{\pi_{ok}}{\pi^{(\numcomps_{o})}_{\star k}}\cdot \frac{p_{\star}^{(\numcomps_{o})}(k \mid x)}{p_{o}(k \mid x)} \\ 
	& \qquad \cdot P_{ok}(\dee x) \bigg\{\log\frac{\pi_{ok}}{\pi^{(\numcomps_{o})}_{\star k}} + \log\frac{p_{\star}^{(\numcomps_{o})}(k \mid x)}{p_{o}(k \mid x)} + \log\frac{dP_{ok}}{dF^{(\numcomps_{o})}_{\opt k}}(x)    \bigg\} \nonumber\\
	& \le (1+\epsilon_{\pi})(1+\epsilon_{z})\bigg\{\int \log \left(1+\epsilon_{\pi})(1+\epsilon_{z}\right) P_{ok}(\dee x)  \label{ineq:use-ration-assump1} \\ 
	& \phantom{\le (1+\epsilon_{\pi})(1+\epsilon_{z})\bigg\{~}  + \int P_{ok}(\dee x) \log\frac{dP_{ok}}{dF^{(\numcomps_{o})}_{\opt k}}(x)     \bigg\} \nonumber \\
	& < (1+\epsilon_{\pi})(1+\epsilon_{z})\left\{\log (1+\epsilon_{\pi})(1+\epsilon_{z}) + \kl{P_{ok}}{F^{(\numcomps_{o})}_{\opt k}}  \right\}  \label{ineq:use-ration-assump2} \\
	& < (1+\epsilon_{\pi})(1+\epsilon_{z})\left\{\log (1+\epsilon_{\pi})(1+\epsilon_{z}) + \rho_o\right\} \label{ineq:use-kl-assump},
\end{align}
where \cref{ineq:use-ration-assump1,ineq:use-ration-assump2} follow by applying the assumptions on the weight ratio and posterior probability ratio and \cref{ineq:use-kl-assump} follows by the upper bound on $\kl{P_{ok}}{F^{(\numcomps_{o})}_{\opt k}}$. Hence, we may set $\rho = (1+\epsilon_{\pi})(1+\epsilon_{z})\left[\log (1+\epsilon_{\pi})(1+\epsilon_{z}) + \rho_o \right]$.

\subsection{Proof of Proposition 4}
It follows by the triangle inequality, \cref{eq:rewrite-phat-k}, H\"older's inequality, and the assumption that all $h \in \mcH$ are bounded by $M$ that
\begin{align}
	&d_{\mcH}(\tilde{P}_{k},F^{(\numcomps_{o})}_{\opt k})\\
	& \le d_{\mcH}(\tilde{P}_{k},P_{ok}) + d_{\mcH}(P_{ok},F^{(\numcomps_{o})}_{\opt k}) \\
	& \le d_{\mcH}\left(\frac{\pi_{ok}}{\pi^{(\numcomps_{o})}_{\star k}} \frac{p_{\star}^{(\numcomps_{o})}(k \mid x)}{p_{o}(k \mid x)} P_{ok},P_{ok}\right) + \rho_{o}\\
	& = \sup_{h\in \mcH} \left\vert \int h(x) \frac{\pi_{ok}}{\pi^{(\numcomps_{o})}_{\star k}} \frac{p_{\star}^{(\numcomps_{o})}(k \mid x)}{p_{o}(k \mid x)} P_{ok}(\dee x) - \int h(y) P_{ok}(\dee y)   \right\vert
	+ \rho_{o} \\
	& =  \sup_{h\in \mcH} \left\vert \int \left(\frac{\pi_{ok}}{\pi^{(\numcomps_{o})}_{\star k}} \frac{p_{\star}^{(\numcomps_{o})}(k \mid x)}{p_{o}(k \mid x)}-1\right) h(x) P_{ok}(\dee x)   \right\vert
	+ \rho_{o} \\
	& \le \int P_{ok}(\dee x)\cdot  \sup_{h\in \mcH} \sup_{x \in \mcX} \left\vert  h(x)   \right\vert \left\vert \frac{\pi_{ok}}{\pi^{(\numcomps_{o})}_{\star k}} \frac{p_{\star}^{(\numcomps_{o})}(k \mid x)}{p_{o}(k \mid x)}-1 \right\vert  + \rho_{o} \\ 
	& \le M \left(\epsilon_{\pi}+\epsilon_{z}+\epsilon_{\pi}\epsilon_{z}\right)
	+ \rho_{o}.
\end{align}
Hence, we may take $\rho = M\left(\epsilon_{\pi}+\epsilon_{z}+\epsilon_{\pi}\epsilon_{z}\right)  +\rho_{o}$.

\section{Connection to Likelihood-based Inference}

For the Kullback--Leibler divergence case, we can relate the structurally aware loss to the conditional negative log-likelihood.
Let $\tilde{X}_{k}$ be a random element selected uniformly from $X^{(k)}(z_{1:\numobs})$. 
Then the conditional negative log-likelihood given $z_{1:\numobs}$ is 
\begin{align}
	-\log p(x_{1:\numobs} \mid \theta, z_{1:\numobs}) 
	&= \sum_{k=1}^{\numcomps}\sum_{x \in X^{(k)}(z_{1:\numobs})} - \log f_{\param_{k}}(x) \\
	&= \sum_{k=1}^{\numcomps} |X^{(k)}(z_{1:\numobs})| E\{-\log f_{\param_{k}}(\tilde X_{k})\}
\end{align}
For the sake of argument, if $\tilde{X}_{k}$ were distributed according to a density $\tilde{p}_{k}$, then
the Kullback--Leibler divergence between $\tilde{p}_{k}$ and $f_{\param_{k}}$ would be  
\[
\begin{aligned}
	\kl{\tilde{p}_{k}}{ f_{\param_{k}}}
	&= E\{\log \tilde{p}_{k}(\tilde X_{k})\} -  E\{\log f_{\param_{k}}(\tilde X_{k})\}  \\
	&= -\mathcal{H}({\tilde{p}_{k}}) -  E\{\log f_{\param_{k}}(\tilde X_{k})\},
\end{aligned}
\]
where $\mathcal{H}(p) = \int p(x) \log p(x) \dee x$ denotes the entropy of a density $p$. 
Then the negative conditional log-likelihood for each is equal to the Kullback--Leibler divergence, up to an entropy term that depends on the data (i.e., $\tdp_{k}$) but not the parameter $\param_{k}$:
\[
-\log p(x_{1:\numobs} \mid \theta, z_{1:\numobs}) 
\approx \sum_{k=1}^{\numcomps}|X^{(k)}(z_{1:\numobs})| \left\{ \kl{\tilde{p}_{k}}{ f_{\param_{k}}} + \mathcal{H}({\tilde{p}_{k}})\right\}.
\]
Thus, we can view the structurally aware loss as targeting the negative conditional log-likelihood but  
(a) using a consistent estimator $\klest{X^{(k)}(z_{1:\numobs})}{f_{\param_{k}}}$ in place of $\kl{\tilde p_{k}}{f_{\param_{k}}}$ and 
(b) ``coarsening'' the Kullback--Leiber divergence using the map $t \mapsto \max(0, t - \rho)$ to avoid overfitting. 

\section{Kullback-Leibler Divergence Estimation} \label{appx:kl-estimator}

\subsection{Theory and Methods}

Following \citet{Wang:2009}, we derive and study the theory of various one-sample Kullback-Leibler estimators on continuous distributions. Estimating Kullback-Leibler between continuous distributions is nontrival. 
One common way is to start with density estimations.

Consider a general case on $\mcX = R^{D}$.
Suppose $y_{1:\numobs} = (y_1, \ldots, y_{\numobs}) \in \mcX^{\otimes \numobs}$ are independent, 
identically distributed from a continuous distribution $P$ with density $p$. 
For $r>0$, one can estimate the density $p(y_n)$ by
\begin{equation}
	P(V_{D}(r)) \approx p(y_{n}) V_{D}(r),
	\label{eq:knn-density-intuition}
\end{equation}
where $V_{D}(r) = \pi^{D/2}r^D/\Gamma(D/2+1)$ is the volume of a $D$-dimensional ball centered at $y_{n}$ of radius $r$. 
Fix the query point $y_{n}$. 
The radius $r$ can be determined by finding the $k$-th nearest neighbor $y_{n(k)}$ of $y_{n}$, i.e., $r_{k,n} = \Vert y_{n(k)} - y_n\Vert$, where $\Vert \cdot \Vert$ denotes the Euclidean distance. 
Therefore, the ball centered at $y_{n}$ with radius $r_{k,n}$ contains $k$ points and thus $P(V_{D}(r_{k,n}))$ can be estimated by $k/(\numobs-1)$. 
Plugging this estimate back to \cref{eq:knn-density-intuition} yields the $k$-nearest-neighbor density estimator for $p(y_n)$,
\begin{equation}
	\hat{p}_{\numobs}(y_n) = \frac{k/(\numobs-1)}{V_{D}(r_{k,n})}.
	\label{eq:knn-density-est-p}
\end{equation}

To estimate Kullback-Leibler divergence, \citet{Wang:2009} studied various two-sample estimators given two sets of samples $y_{1},\dots, y_{N} \sim P$ and $z_{1},\dots,z_{M} \sim Q$ where the distributions $P$ and $Q$ are unknown.
However, in the context of our method, we want to estimate the Kullback-Leibler divergence with one set of samples $y_{1:\numobs}$ and one known distribution from our assumed model $Q$. 
Hence, we modify the two-sample $k$-nearest-neighbor estimators from \citep{Wang:2009} to 
create one-sample Kullback--Leibler estimators. 

Given samples $y_{1},\dots, y_{N} \sim P$, where $P$ is unknown, and a known distribution $Q$ with density $q$, 
we can use \cref{eq:knn-density-est-p} to obtain the one-sample $k$-nearest-neighbor estimator
\begin{equation}
	\begin{aligned}
		\klestsub{b}{k}{y_{1:\numobs}}{Q}
		&= \frac{1}{\numobs}\sum_{n=1}^{\numobs}\log\left\{\frac{\hat{p}_{\numobs}(y_n)}{q(y_n)}\right\}\\
		& = \frac{1}{N}\sum_{n=1}^{\numobs} \log\left\{\frac{k/(\numobs-1)}{V_{D}(r_{k,n}) q(y_n)}\right\}.
		\label{eq:canonical-knn-kl-est}
	\end{aligned}
\end{equation}
Following the proof of \citet[Theorem 1]{Wang:2009}, we can show that for fixed $k$
\begin{equation}
	\lim\limits_{n\rightarrow\infty}E[\klestsub{b}{k}{y_{1:\numobs}}{Q}] = \kl{P}{Q} +\log k- \psi(k),
	\label{eq:consistency-stare-knn-kl-est}
\end{equation}
where $\psi(k) = \Gamma'(k)/\Gamma(k)$ is the digamma function. \cref{eq:consistency-stare-knn-kl-est} suggests that this canonical Kullback--Leibler estimator is asymptotically biased. 
However, using \cref{eq:consistency-stare-knn-kl-est}, we can define the consistent (asymptotically unbiased) estimator 
\begin{equation}
	\klestsub{u}{k}{y_{1:\numobs}}{Q} = \klestsub{b}{k}{y_{1:\numobs}}{Q} - \log k + \psi(k).
	\label{eq:biased-correct-knn-kl-est}
\end{equation}
Another way to eliminate the bias is to make $k$ data-dependent, which we call \emph{adaptive} $k$-nearest-neighbor estimators. 
Following the proof of \citet[Theorem 5]{Wang:2009}, we can show that $\klestsub{b}{k_{\numobs}}{y_{1:\numobs}}{Q}$ is asymptotically consistent by choosing $k_{N}$ to satisfy mild growth conditions. 
\begin{proposition} \label{prop:one-sample-klest}
	Suppose $P$ and $Q$ are distributions uniformly continuous on $\mathbb{R}^D$ with densities $p$ and $q$, 
	and $\kl{P}{Q} < \infty$. Let $k_{\numobs}$ be a positive integer satisfying 
	\[
	\frac{k_{\numobs}}{\numobs} \rightarrow 0, \qquad \frac{k_{\numobs}}{\log \numobs} \rightarrow \infty.
	\]
	If $\inf_{p(y)>0} p(y)>0$ and $\inf_{q(y)>0} q(y)>0$, then
	\begin{equation}
		\lim\limits_{n\rightarrow\infty}\klestsub{b}{k_{\numobs}}{y_{1:\numobs}}{Q} = \kl{P}{Q}
	\end{equation}
	almost surely.
\end{proposition}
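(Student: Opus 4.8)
The plan is to reduce the statement to the strong law of large numbers plus an almost-sure analysis of the $k$-nearest-neighbor log-density term. First I would insert $\pm\log p(y_n)$ into each summand to write
\[
\klestsub{b}{k_{\numobs}}{y_{1:\numobs}}{Q} = I_{\numobs} + J_{\numobs}, \qquad
I_{\numobs} = \frac{1}{\numobs}\sum_{n=1}^{\numobs}\log\frac{k_{\numobs}/(\numobs-1)}{V_{D}(r_{k_{\numobs},n})\,p(y_n)}, \quad
J_{\numobs} = \frac{1}{\numobs}\sum_{n=1}^{\numobs}\log\frac{p(y_n)}{q(y_n)}.
\]
Because $\kl{P}{Q}<\infty$ forces $P\abscont Q$, and because $p$ and $q$ are uniformly continuous with $\inf_{p>0}p>0$ and $\inf_{q>0}q>0$ (which makes both supports compact and the densities bounded), the integrand $\log(p/q)$ is bounded and $P$-integrable and $y_n$ lies in both supports almost surely. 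Hence $J_{\numobs}\to\kl{P}{Q}$ almost surely by the strong law of large numbers, and it remains to prove $I_{\numobs}\to 0$ almost surely.

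The core of the argument is a change of variables to the probability content of each neighborhood. Let $\tilde{P}_n = P(B(y_n, r_{k_{\numobs},n}))$ denote the $P$-mass of the ball around $y_n$ containing its $k_{\numobs}$ nearest neighbors. By the probability-integral transform, conditionally on $y_n$ the value $\tilde{P}_n$ is the $k_{\numobs}$-th order statistic of $\numobs-1$ i.i.d.\ uniforms, so $\tilde{P}_n\sim\distNamed{Beta}(k_{\numobs},\numobs-k_{\numobs})$. I would then split each summand of $I_{\numobs}$ as
\[
\log\frac{k_{\numobs}/(\numobs-1)}{V_{D}(r_{k_{\numobs},n})\,p(y_n)}
= \log\frac{k_{\numobs}/(\numobs-1)}{\tilde{P}_n}
+ \log\frac{\tilde{P}_n}{V_{D}(r_{k_{\numobs},n})\,p(y_n)}.
\]
The second term is geometric: since $k_{\numobs}/\numobs\to 0$ and $p$ is bounded below on its compact support, $\max_n r_{k_{\numobs},n}\to 0$ almost surely, and as $\tilde{P}_n=\int_{B(y_n,r_{k_{\numobs},n})}p(u)\,\dee u$, uniform continuity of $p$ gives $\tilde{P}_n/\{V_{D}(r_{k_{\numobs},n})\,p(y_n)\}\to 1$ uniformly in $n$; hence the average of these terms tends to $0$.

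The first term is probabilistic, and here lies the main obstacle. Its mean is controlled exactly: $E[\log\tilde{P}_n]=\psi(k_{\numobs})-\psi(\numobs)$ and $\log\{k_{\numobs}/(\numobs-1)\}-\psi(k_{\numobs})+\psi(\numobs)\to 0$ whenever $k_{\numobs}\to\infty$, which is precisely why the adaptive estimator requires no digamma correction, in contrast to \cref{eq:consistency-stare-knn-kl-est}. The difficulty is the almost-sure control of the centered sum $\numobs^{-1}\sum_n\{\log\tilde{P}_n-E\log\tilde{P}_n\}$, because the $\tilde{P}_n$ are dependent (they share the whole sample) and $\log\tilde{P}_n$ is unbounded as $\tilde{P}_n\to 0$. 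Following \citet[Theorem 5]{Wang:2009}, I would derive an exponential deviation bound with rate governed by $k_{\numobs}$; the growth condition $k_{\numobs}/\log\numobs\to\infty$ then makes the deviation probabilities summable, so Borel--Cantelli delivers the almost-sure convergence of this average to $0$. Combining the two averages gives $I_{\numobs}\to 0$ almost surely, which together with $J_{\numobs}\to\kl{P}{Q}$ completes the proof.
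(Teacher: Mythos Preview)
Your decomposition into $I_{\numobs}$ and $J_{\numobs}$ matches the paper's split into $e_1$ and $e_2$, and your handling of $J_{\numobs}$ via the strong law is identical to the paper's treatment of $e_2$. The routes diverge on $I_{\numobs}$. The paper disposes of this term in one stroke: it invokes the theorem of \citet{Devroye:1977} to conclude that under $k_{\numobs}/\numobs\to 0$ and $k_{\numobs}/\log\numobs\to\infty$ the $k$-nearest-neighbor density estimate $\hat p_{\numobs}$ is \emph{uniformly} strongly consistent, i.e.\ $\sup_y|\hat p_{\numobs}(y)-p(y)|\to 0$ almost surely; combined with the lower bound $\inf_{p>0}p>0$ this makes $\sup_y|\log\hat p_{\numobs}(y)-\log p(y)|\to 0$, so $e_1\to 0$ follows immediately. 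Your approach instead re-derives what is needed by hand, splitting each summand into a geometric piece (controlled by uniform continuity and $\max_n r_{k_{\numobs},n}\to 0$) and a probabilistic piece (the $\mathrm{Beta}(k_{\numobs},\numobs-k_{\numobs})$ structure plus an exponential bound and Borel--Cantelli). This is a legitimate alternative and has the virtue of exposing exactly which growth hypothesis is doing which job---$k_{\numobs}/\numobs\to 0$ drives shrinkage of the radii, $k_{\numobs}/\log\numobs\to\infty$ makes the tail probabilities summable---whereas the paper buries both inside the Devroye citation. The cost is that you are essentially reproving a fragment of Devroye's theorem (uniform shrinkage of the neighborhoods and concentration of the empirical mass), and the dependence among the $\tilde P_n$ that you flag still has to be handled carefully; the paper sidesteps all of this by outsourcing to the black-box uniform-consistency result.
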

\begin{proof}
	Let $p$ and $q$ are densities of $P$ and $Q$ respectively.
	Consider the following decomposition of the error
	\begin{equation}
		\begin{aligned} 
			& \left|\klestsub{b}{k_{\numobs}}{y_{1:\numobs}}{Q} -\kl{P}{Q} \right| \\ 
			& \leq\left|\frac{1}{\numobs} \sum_{n=1}^{\numobs} \log\left\{\frac{\hat{p}_{\numobs}\left(y_n\right)}{q\left(y_n\right)}\right\} -\frac{1}{\numobs} \sum_{n=1}^{\numobs} \log\left\{\frac{p\left(y_n\right)}{q\left(y_n\right)}\right\} \right| +\left|\frac{1}{\numobs} \sum_{n=1}^{\numobs} \log\left\{\frac{p\left(y_n\right)}{q\left(y_n\right)}\right\} -\kl{P}{Q}\right| \\ 
			& \leq \frac{1}{\numobs} \sum_{n=1}^{\numobs}\left|\log \hat{p}_{\numobs}\left(y_n\right)-\log p\left(y_n\right)\right|  +\left|\frac{1}{\numobs} \sum_{n=1}^{\numobs} \log\left\{\frac{p\left(y_n\right)}{q\left(y_n\right)}\right\} -\kl{P}{Q}\right| \\ &= e_1+e_2 .\end{aligned}
	\end{equation}
	It follows by the conditions that $k_{\numobs}/\numobs \rightarrow 0$ and $k_{\numobs}/\log \numobs \rightarrow \infty$ 	and the theorem given in \cite{Devroye:1977} that $\hat{p}_{\numobs}$ is uniformly strongly consistent: almost surely
	\begin{equation}
		\lim _{\numobs \rightarrow \infty} \sup _y\left|\hat{p}_{\numobs}(y)-p(y)\right| \rightarrow 0.
	\end{equation}
	Therefore, following the proof of \cite{Wang:2009}, for any $\epsilon > 0$, there exists $N_1$ such that for any $n > N_1$, $e_1 < \epsilon/2$. For $e_2$, it simply follows by the Law of Large Numbers that for any $\epsilon > 0$, there exists $N_2 $ such that for any $n > N_2$, $e_2 < \epsilon/2$. By choosing $N = \max(N_1, N_2)$, for any $n>N$, we have $|\klestsub{b}{k_{\numobs}}{y_{1:\numobs}}{Q} -\kl{P}{Q} | < \epsilon$.
\end{proof}

\subsection{Empirical Comparison}

We now empirically compare the behavior of these $k$-nearest-neighbor Kullback--Leibler estimators.
Consider two multivariate Gaussian distributions $P = \distNorm(\mu_1, \Sigma_1)$ and $Q = \distNorm(\mu_2, \Sigma_2)$. The theoretical value for the Kullback--Leibler divergence between $P$ and $Q$ is
\begin{equation}
	\kl{P}{Q} = \frac{1}{2}\left[\log\frac{|\Sigma_2|}{|\Sigma_1|} - d + tr(\Sigma_2^{-1}\Sigma_1) + (\mu_2-\mu_1)^T\Sigma_2^{-1}(\mu_2-\mu_1)\right],
	\label{eq:kl-gaussian}
\end{equation}
where $|\cdot|$ is the determinant of a matrix and $tr(\cdot)$ denotes the trace. 
We generate samples $y_{1:\numobs}$ from $P$ and estimate $\klest{y_{1:\numobs}}{Q}$ with the three estimators above:
the canonical fixed $k$ estimator in \cref{eq:canonical-knn-kl-est} with $k\in\{1,10\}$,  the bias-corrected estimator in \cref{eq:biased-correct-knn-kl-est} with $k\in\{1,10\}$ and the adaptive estimator with $k_{\numobs} =  \numobs^{1/2}$.

We generate samples from a weakly correlated multivariate Gaussian distirbution. Set $P = \distNorm(\mu, \Sigma)$ with $\mu=(1,\ldots,1)\in R^D$ and $\Sigma_{ij}=\exp\{-(i-j)^2/\sigma^2\}$, where large $\sigma$ results in high correlations and vice versa. 
Let $\sigma=0.6$ and set $Q = \distNorm(0, I_{D})$.
We test the performance of each estimator with varying $\numobs \in \{100, 1000, 5000, 10000, 20000, 50000\}$ and varying dimensions $D \in \{4, 10, 25, 50\}$.

As shown in \cref{fig:knn-kl-est-comparison}, when $D=4$, the adaptive estimator with $k_{\numobs}=\numobs^{1/2}$ outperforms and shows reliable estimation when sample size is large ($\numobs \ge 5000$). 
This scenario resembles the setup in our simulation and real-data experiments.
We therefore use the adaptive estimator with $k_{\numobs}=\numobs^{1/2}$ for our experiments in Section 6.

When the dimension increases, the stability of all $k$-nearest-neighbor estimators drops due to the sparsity of data in high dimensions.
This reveals a limitation of all $k$-nearest-neighbor estimators.
Although proposing estimators for divergence is beyond the scope of the paper,
we test one possible adaption in Section 6.2 to use the $k$-nearest-neighbor estimators in high dimensions by assuming independence across coordinates.

\begin{figure}[hp]
	\centering	
	\subfloat[$d = 4$]{\includegraphics[width=65mm]{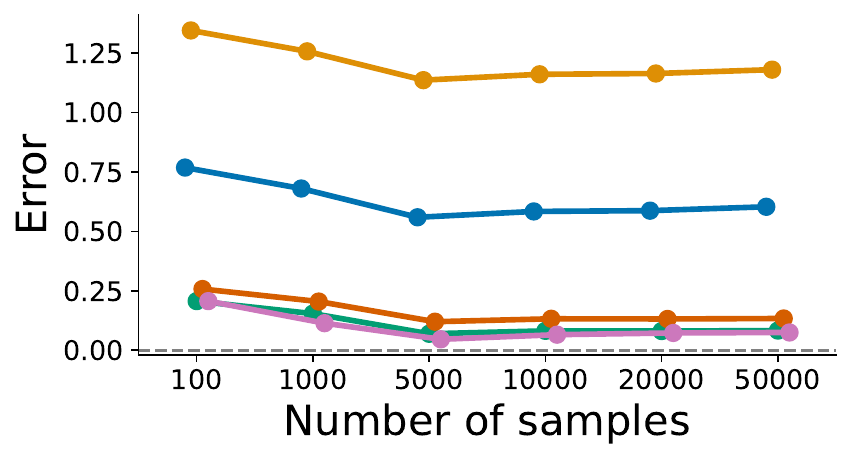}}
	\subfloat[$d = 10$]{\includegraphics[width=65mm]{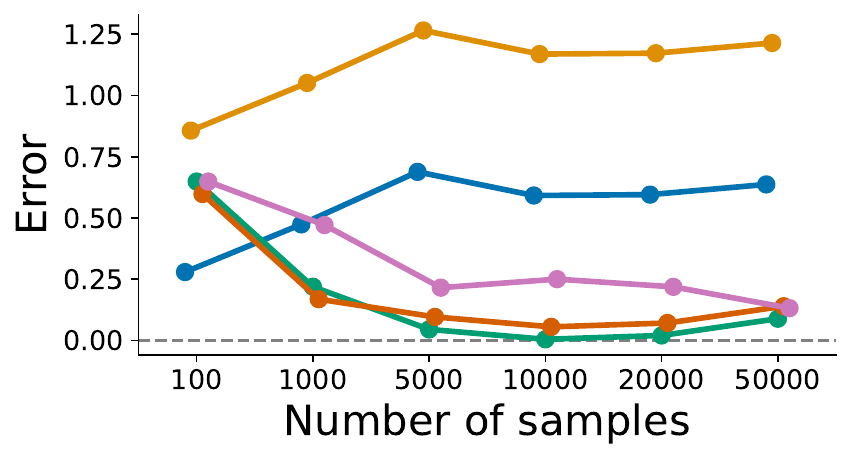}}
	\\
	\subfloat[$d = 25$]{\includegraphics[width=65mm]{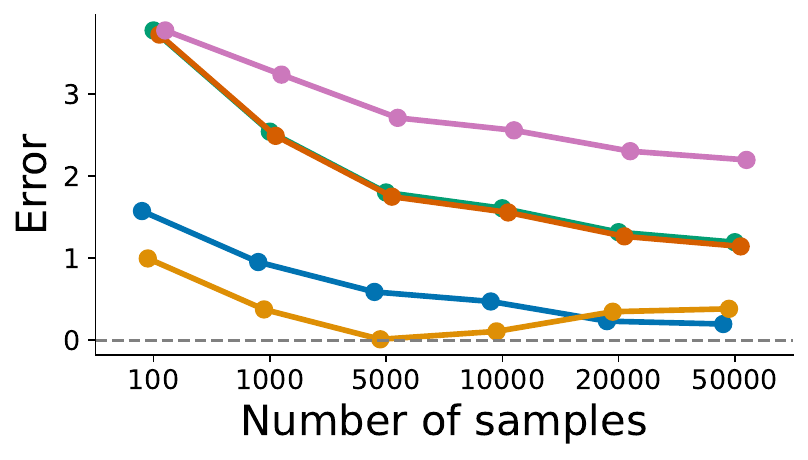}}
	\subfloat[$d = 50$]{\includegraphics[width=65mm]{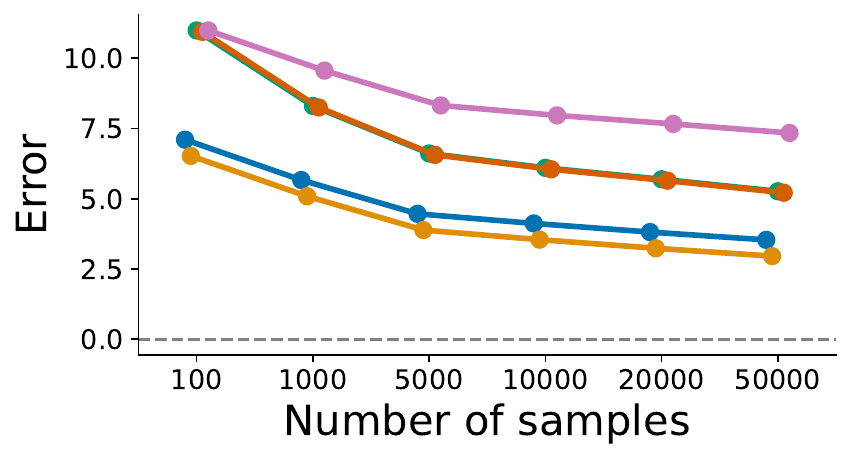}}
	\caption{Absolute error against sample size for canonical $1$-nearest-neighbor estimator (blue), canonical $10$-nearest-neighbor estimator (green), bias-corrected $1$-nearest-neighbor estimator (yellow), bias-corrected $10$-nearest-neighbor estimator (orange) and adaptive $k_{\numobs}$-nearest-neighbor estimator with $k_{\numobs}=\numobs^{1/2}$ (pink). Each panel correpsonds with a different dimension $D\in\{4,10,25,50\}$. Gray dotted lines indicate no error.} 
	\label{fig:knn-kl-est-comparison}
\end{figure}

\section{Additional Calibration Figures}

\subsection{Simulation Study}
\label{appx:simulation-gauss}

The coarsened posterior requires calibration of the hyperparameter $\alpha$, which determines the degree of misspecification.
For all scenarios considered in Section 2 and Section 6.1, we select $\alpha$ using the \emph{elbow method} proposed by \citet{Miller:2019}.
In this section, we include all calibration figures for the coarsened posterior following the code provided by \citet{Miller:2019}.

As shown in \cref{fig:coarsen-calibration}, we calibrate $\alpha$ as the turning points where we see no significant increase
in the log-likelihood if $\alpha$ increases.
With these elbow values for $\alpha$, we can see for all cases except the \texttt{small-large} case, the coarsened posterior consistently estimates the number of clusters (even after removing mini clusters with size $<2\%$) as $\hat{\numcomps} = 3 > \numcomps_{o} = 2$. 

\begin{figure}[tp]
	\centering	
	\includegraphics[width=65mm]{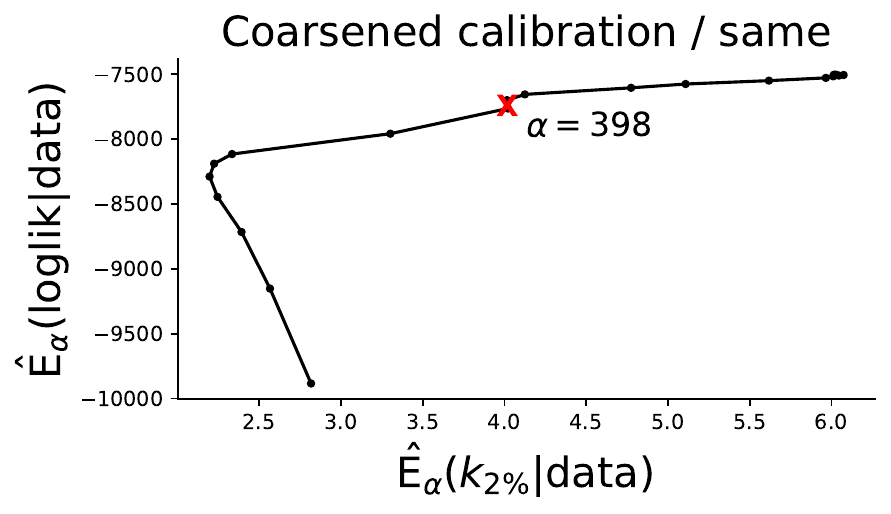}
	\includegraphics[width=65mm]{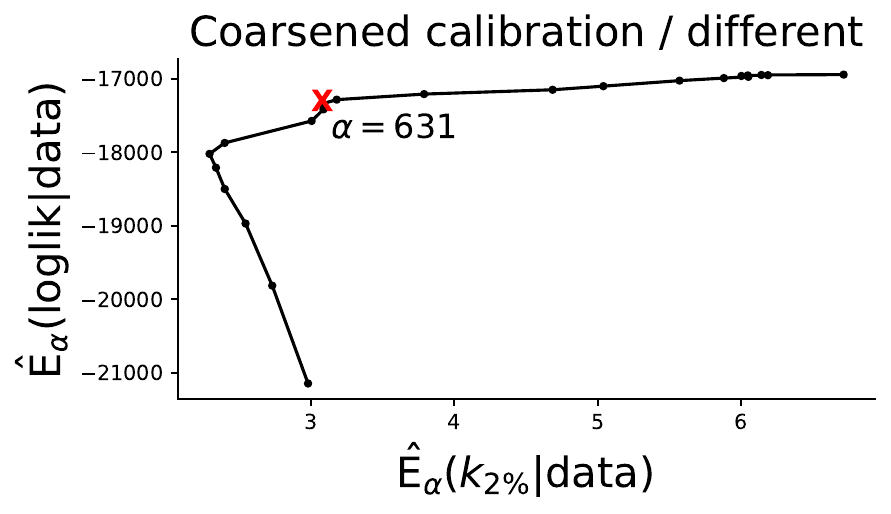}\\
	\includegraphics[width=65mm]{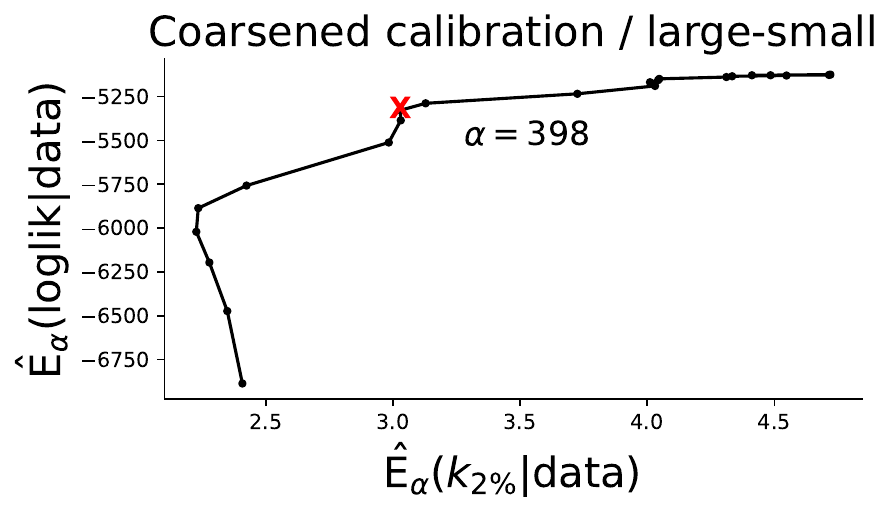}
	\includegraphics[width=65mm]{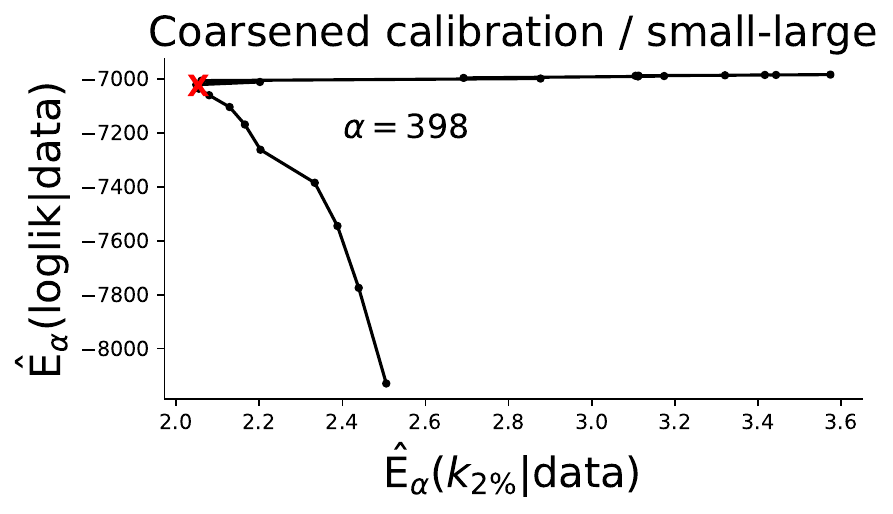}\\
	\includegraphics[width=65mm]{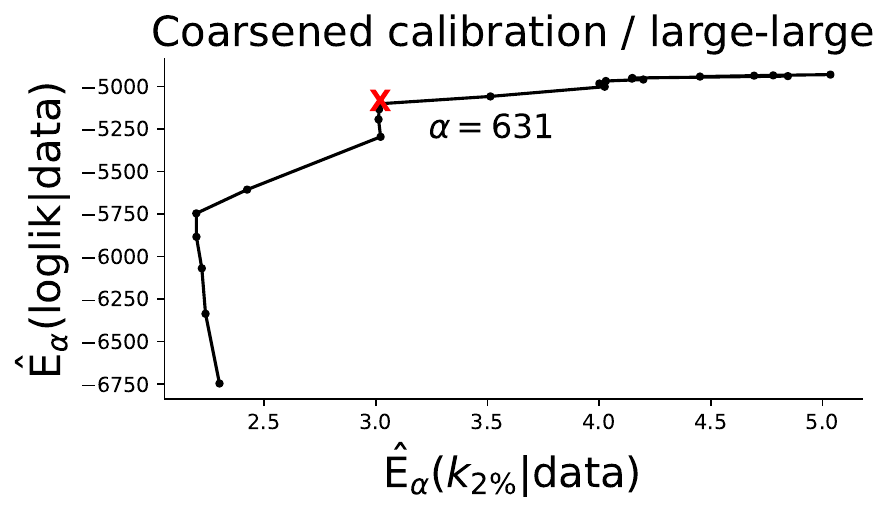}
	\caption{For the mixture of skew-normals example from Section 2 and Section 6.1, each panel shows the expected logliklihood $\hat{E}_{\alpha}(\mathrm{loglik}\mid \mathrm{data})$ against the expected number of clusters which excludes tiny clusters of size less that $2\%$ of whole dataset denoted as $\hat{E}_{\alpha}(k_{2\%}\mid \mathrm{data})$. 
		We select $\alpha$ as the elbow point in the plots.} 
	\label{fig:coarsen-calibration}
\end{figure}

\subsection{Flow Cytometry Data}
\label{appx:flow-cytometry}

In this section, we include loss and F-measure plots of our model selection method on all test datasets 7--12. 
See \citet[Section 5.2]{Miller:2019} for a discussion of the exact calibration procedure for the coarsened posterior.

Recall that to calibrate $\rho$, we select $\rho$ that optimizes the F-measure across first $6$ datasets.
To incorporate this prior knowledge on test datasets, we suggests selecting the value of $\numcomps$ 
that has has stable penalized loss and is closest to the optimal $\rho$. 
We compare our selection $\hat{\numcomps}$ with the ground truth $\numcomps_o$ labeled by experts.
For each dataset, there is always one cluster labeled as unknown due to some unclear information for cells.
With automatic clustering algorithm, it is natural for the algorithm to identify those unlabeled points and assign them to other clusters, which results in $\numcomps_o-1$ clusters.
So we treat both $\numcomps_o$ and $\numcomps_o-1$ as ground truth in our analysis. 
As shown in \cref{fig:GvHD,fig:GvHD2}, our selection method results in highest F-measure for datasets 8--12. 
Dataset 7 is challenging and even the ground truth does not produce a large F-measure.

\begin{figure}[tp]
	\centering	
	\subfloat[Data 7]{\label{fig:data7}
		\includegraphics[width=65mm]{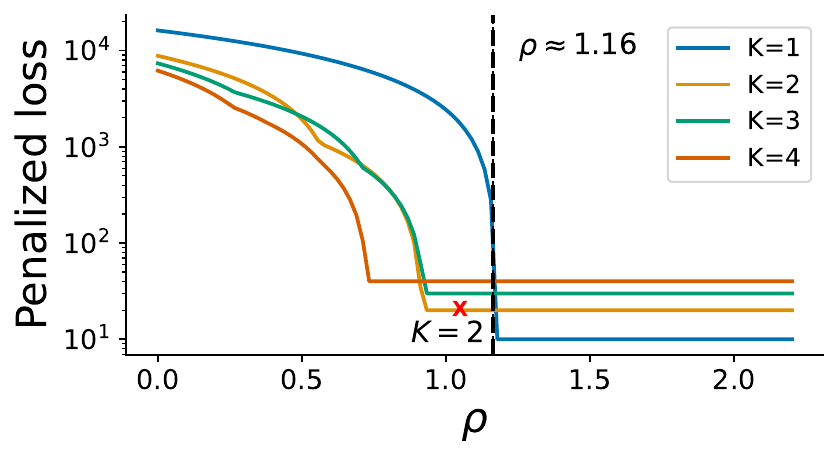}
		\includegraphics[width=65mm]{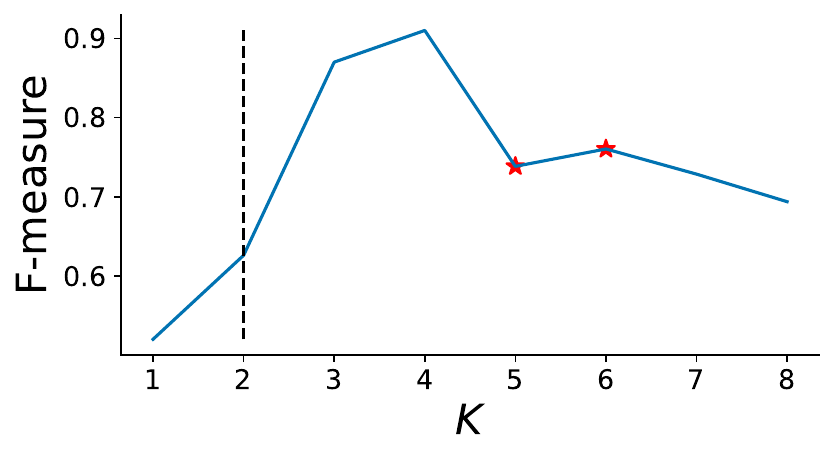}}	\\
	\subfloat[Data 8]{\label{fig:data8}
		\includegraphics[width=65mm]{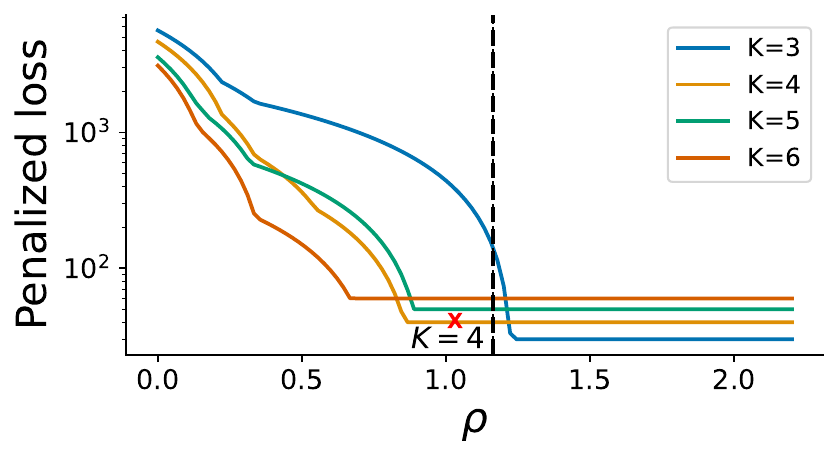}
		\includegraphics[width=65mm]{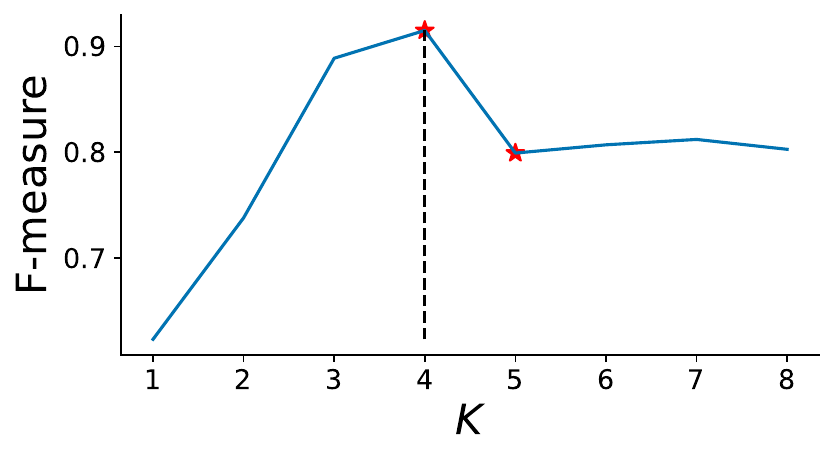}}	\\
	\subfloat[Data 9]{\label{fig:data9}
		\includegraphics[width=65mm]{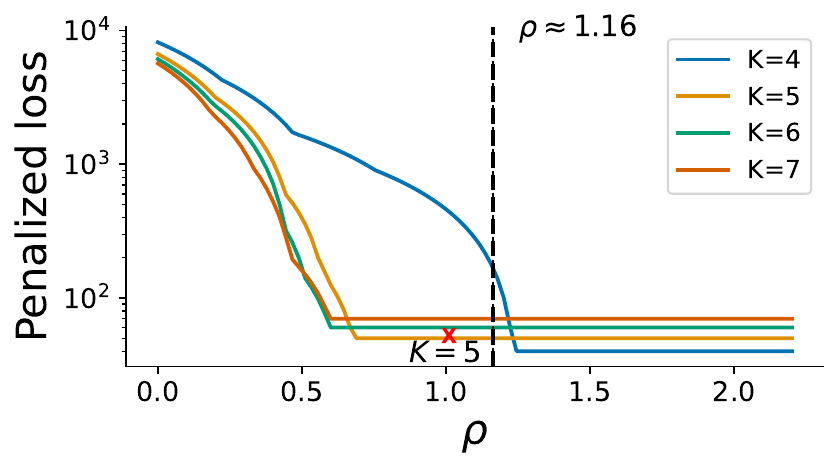}
		\includegraphics[width=65mm]{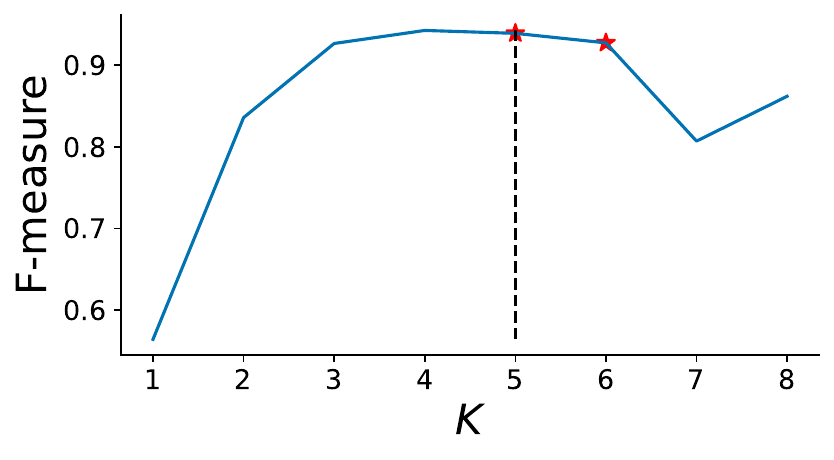}}
	\caption{Calibration and F-measure plots for test datasets 7--9 in flow cytometry experiments. \textbf{Left}: The black dashed lines indicate the optimal $\rho$ calibrated on training datasets 1--6. The cross mark indicates the selection for number of clusters. \textbf{Right}: F-measure against the number of clusters. The dashed line shows the number of clusters selected by our method and the red star indicates the ground truth $\numcomps_{o}$.} 
	\label{fig:GvHD}
\end{figure}

\begin{figure}[tp]
	\centering	
	\subfloat[Data 10]{\label{fig:data10}
		\includegraphics[width=65mm]{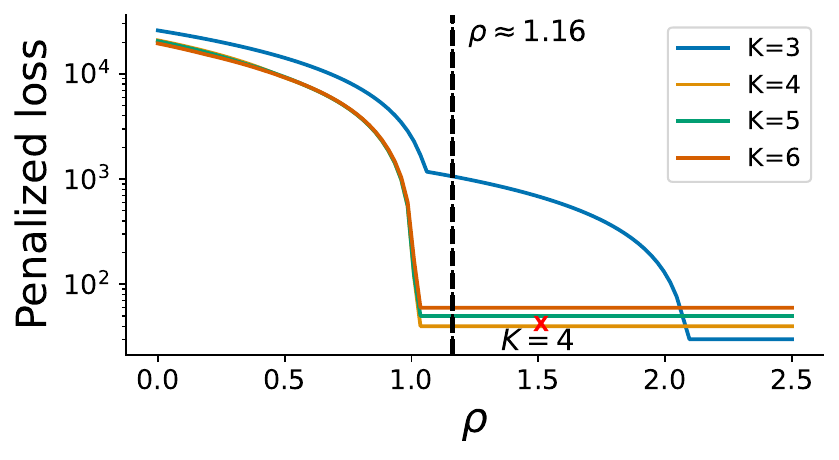}
		\includegraphics[width=65mm]{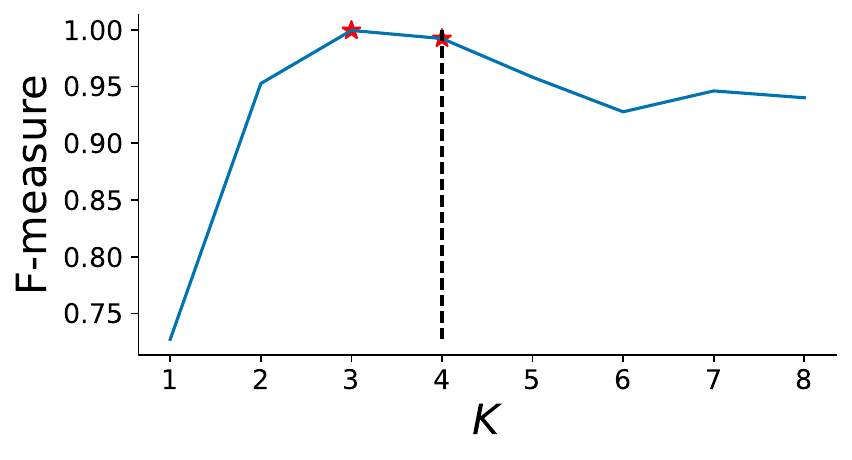}}	\\
	\subfloat[Data 11]{\label{fig:data11}
		\includegraphics[width=65mm]{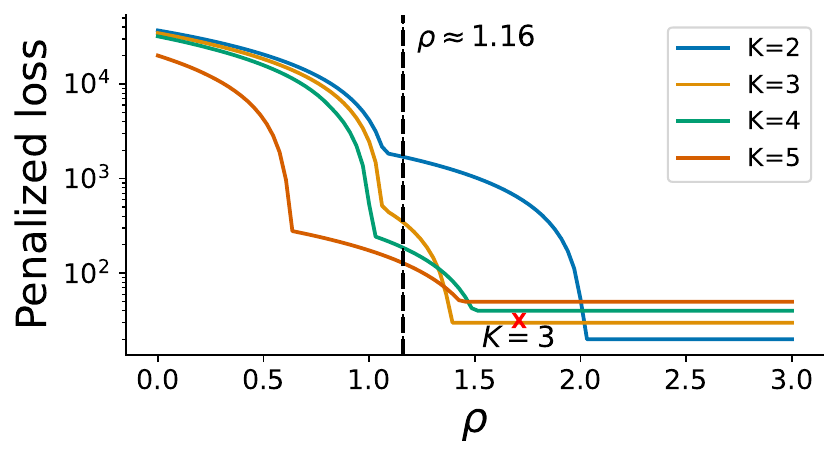}
		\includegraphics[width=65mm]{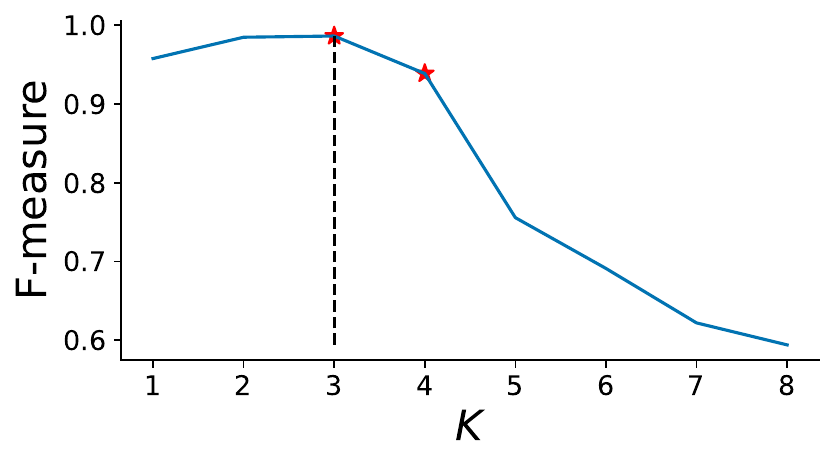}}	\\
	\subfloat[Data 12]{\label{fig:data12}
		\includegraphics[width=65mm]{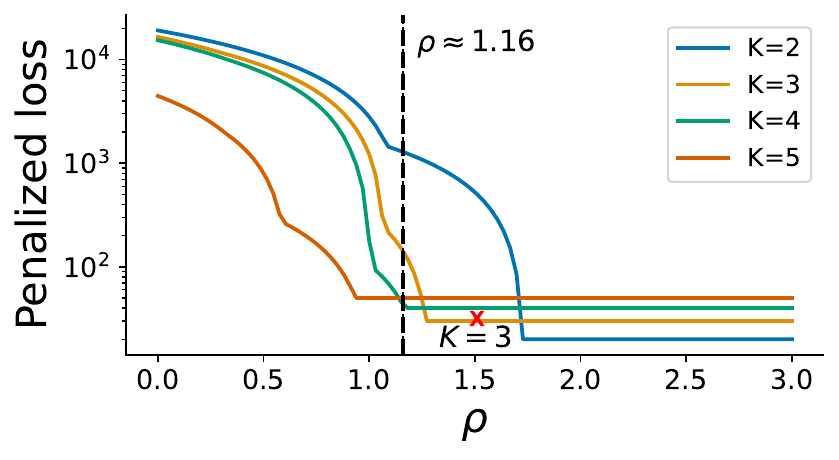}
		\includegraphics[width=65mm]{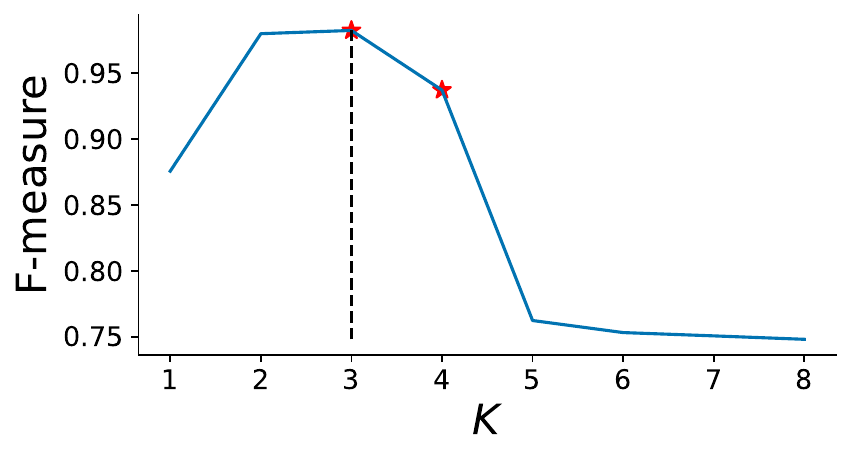}}	
	\caption{Calibration and F-measure plots for test datasets $10-12$ in flow cytometry experiments. See caption for \cref{fig:GvHD} for details}  
	\label{fig:GvHD2}
\end{figure}

\begin{figure}[tp]
	\centering	
	\subfloat[Data 7, $\numcomps=8$]{\label{fig:data7-loss}
		\includegraphics[width=65mm]{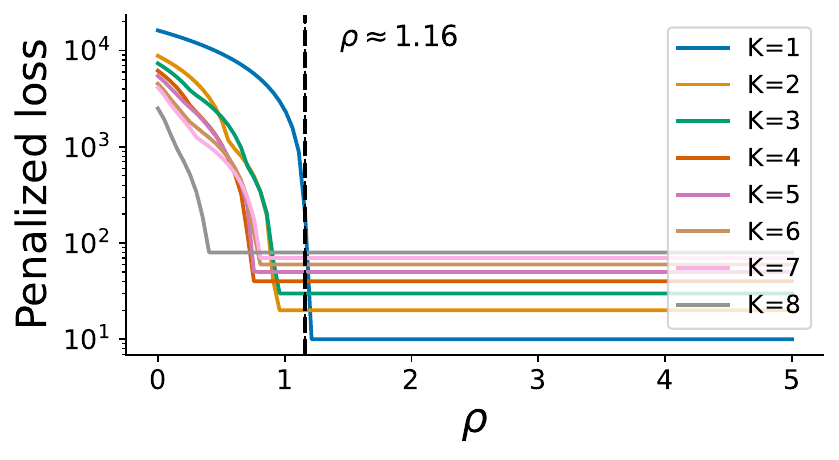}}
	\subfloat[Data 8, $\numcomps=7$]{\label{fig:data8-loss}
		\includegraphics[width=65mm]{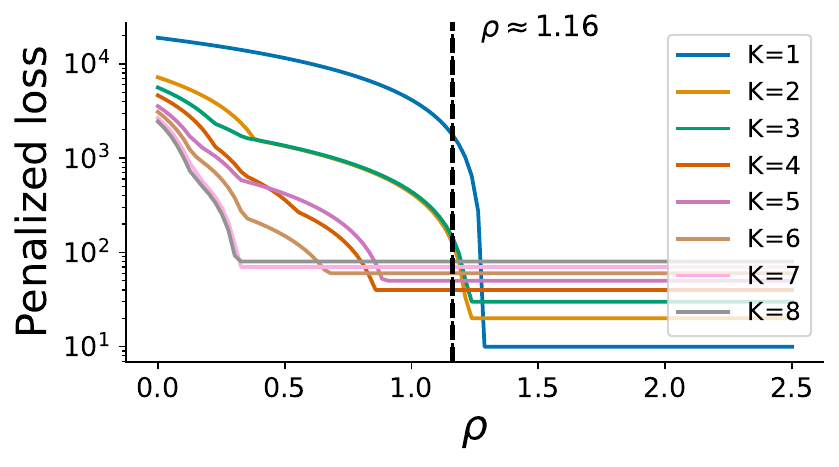}}\\
	\subfloat[Data 9, $\numcomps=5$]{\label{fig:data9-loss}
		\includegraphics[width=65mm]{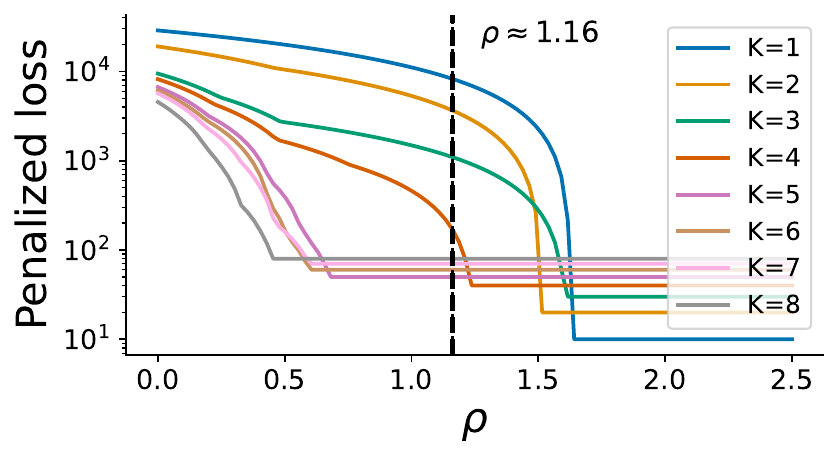}}
	\subfloat[Data 10, $\numcomps=4$]{\label{fig:data10-loss}
		\includegraphics[width=65mm]{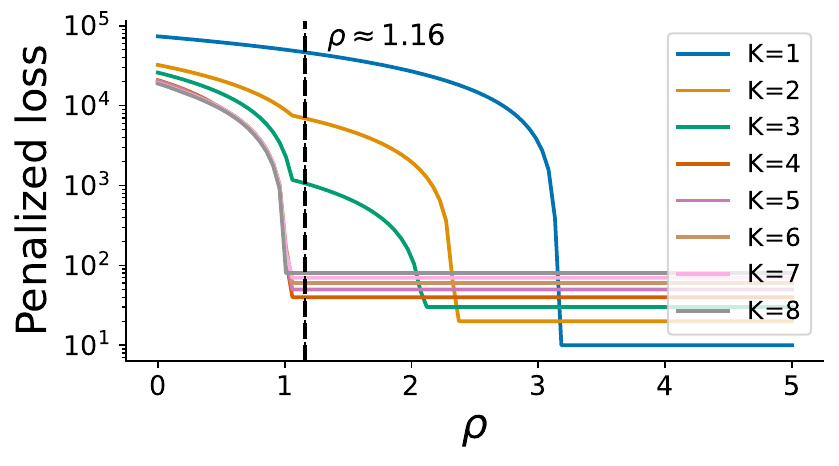}}\\
	\subfloat[Data 11, $\numcomps=3$]{\label{fig:data11-loss}
		\includegraphics[width=65mm]{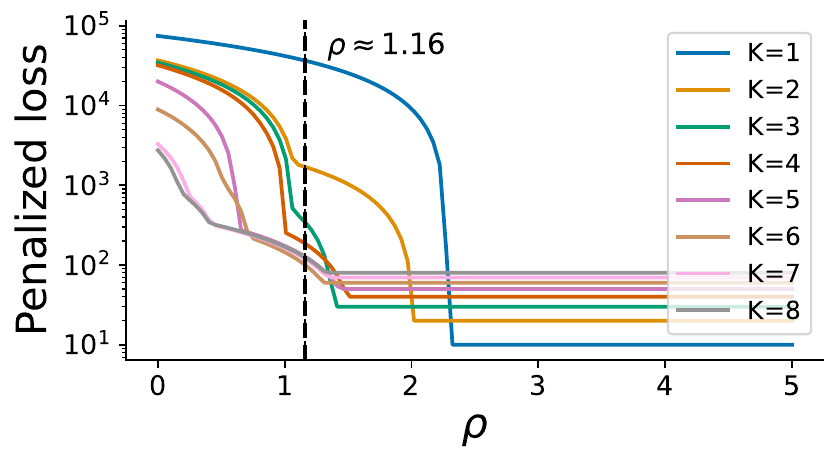}}
	\subfloat[Data 12, $\numcomps=3$]{\label{fig:data12-loss}
		\includegraphics[width=65mm]{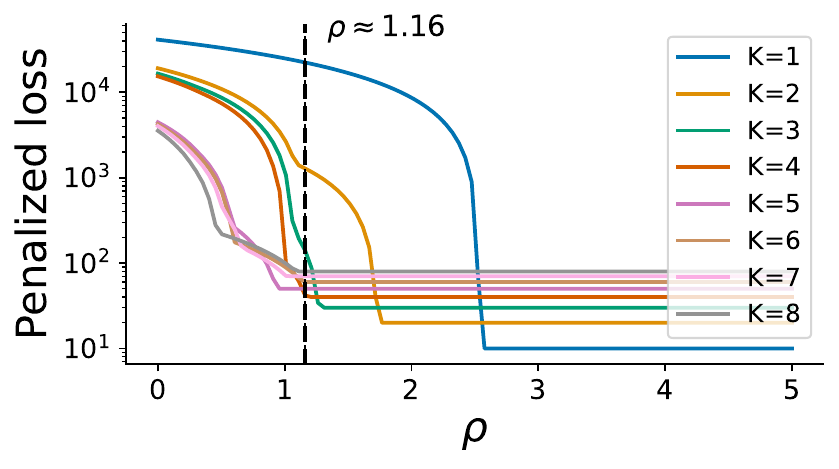}}
	\caption{Calibration including $\numcomps=1,\ldots,8$ for test datasets 7--12 in flow cytometry experiments. }  
	\label{fig:GvHD3}
\end{figure}

\end{document}